\newtheorem{rrule}{Reduction-Rule} 
\newtheorem{xrule}{Rule} 
\begin{document}
\title{An Improved Kernel and Parameterized Algorithm for Almost Induced Matching}
\titlerunning{Almost Induced Matching}
%
\author{
    Yuxi Liu 
    \and
    Mingyu Xiao\orcidID{0000-0002-1012-2373}
}
\authorrunning{Y. Liu and M. Xiao}
%
\institute{University of Electronic Science and Technology of China, Chengdu, China
\email{202211081321@std.uestc.edu.cn, myxiao@uestc.edu.cn}}
\maketitle              

\begin{abstract}
    An induced subgraph is called an induced matching if each vertex is a degree-1 vertex in the subgraph.
    The \textsc{Almost Induced Matching} problem asks whether we can delete at most $k$ vertices from the input graph such that the remaining graph is an induced matching.
    This paper studies parameterized algorithms for this problem by taking the size $k$ of the deletion set as the parameter.
    First, we prove a $6k$-vertex kernel for this problem, improving the previous result of $7k$. Second, we give an $O^*(1.6765^k)$-time and polynomial-space algorithm, improving the previous running-time bound of $O^*(1.7485^k)$.

\end{abstract}
\section{Introduction}

An \textit{induced matching} is an induced regular graph of degree 1.
The problem of finding an induced matching of maximum size is known as \textsc{Maximum Induced Matching} (MIM),
which is crucial in algorithmic graph theory.
In many graph classes such as trees \cite{golumbic2000new}, chordal graphs \cite{cameron1989induced}, circular-arc graphs \cite{golumbic1993irredundancy}, and interval graphs \cite{golumbic2000new},
a maximum induced matching can be found in polynomial time.
However, \textsc{Maximum Induced Matching} is NP-hard in planar 3-regular graphs or planar bipartite graphs with degree-2 vertices in one part and degree-3 vertices in the other part \cite{duckworth2005approximability,ko2003bipartite,stockmeyer1982np}.
Kobler and Rotics \cite{kobler2003finding} proved the NP-hardness of this problem in Hamiltonian graphs, claw-free graphs, chair-free graphs, line graphs, and regular graphs.
The applications of induced matchings are diverse and include secure communication channels, VLSI design, and network flow problems, as demonstrated by Golumbic and Lewenstein \cite{golumbic2000new}.

In terms of exact algorithms, Gupta, Raman, and Saurabh \cite{gupta2012maximum} demonstrated that \textsc{Maximum Induced Matching} can be solved in $O^*(1.6957^n)$ time.
This result was later improved to $O^*(1.3752^n)$ by Xiao and Tan \cite{xiao2017exact}.
For subcubic graphs (i.e., such graphs with maximum degree $3$), Hoi, Sabili and Stephan \cite{DBLP:journals/corr/abs-2201-03220} showed that \textsc{Maximum Induced Matching} can be solved in $O^*(1.2630^n)$ time and polynomial space.
In terms of parameterized algorithms, where the parameter is the solution size $k'$, \textsc{Maximum Induced Matching} is W[1]-hard in general graphs \cite{moser2009parameterizedregular}, and is not expected to have a polynomial kernel.
However, Moser and Sidkar \cite{moser2009parameterizedinducedmatching} showed that the problem becomes fixed-parameter tractable (FPT) when the graph is a planar graph, by providing a linear-size problem kernel.
The kernel size was improved to $40k'$ by Kanj et al. \cite{kanj2011induced}.
In this paper, we study parameterized algorithms for \textsc{Maximum Induced Matching} with the parameter being the number $k$ of vertices not in the induced matching.
The problem is formally defined as follows:

\noindent\rule{\linewidth}{0.2mm}
\textsc{Almost Induced Matching}\\
\textbf{Instance:} A graph $G=(V,E)$ and an integer $k$.\\
\textbf{Question:}  Is there a vertex subset $S\subseteq V$ of size at most $k$ whose deletion makes the graph an induced matching?\\
\rule{\linewidth}{0.2mm}

\textsc{Almost Induced Matching} becomes FPT when we take the size $k$ of the deletion set as the parameter.
Xiao and Kou \cite{xiao2016almost,xiao2020parameterized} showed that \textsc{Almost Induced Matching} can be solved in $O^*(1.7485^k)$. The same running time bound was also achieved in \cite{kumar2020deletion} recently.
In terms of kernelization, Moser and Thilikos \cite{moser2009parameterizedregular} provided a kernel of $O(k^3)$ vertices. Then, Mathieson and Szeider \cite{mathieson2012editing} improved the result to $O(k^2)$.
Last, Xiao and Kou \cite{xiao2016almost,xiao2020parameterized} obtained the first linear-vertex kernel of $7k$ vertices for this problem.
In this paper, we first improve the kernel size to $6k$ vertices, and then
give an $O^*(1.6765^k)$-time and polynomial-space algorithm for \textsc{Almost Induced Matching}.

For kernelization, the main technique in this paper is a variant of the crown decomposition.
We find a maximal 3-path packing $\mathcal{P}$ in the graph,
partition the vertex set into two parts $V(\mathcal{P})$ and $Q=V\setminus V(\mathcal{P})$,
and reduce the number of the size-1 connected components of $G[Q]$ to at most $k$.
The size-2 components in $G[Q]$ can be reduced to at most $k$ by using the new ``AIM crown decomposition" technique.
Note that each connected component of $G[Q]$ has a size of at most $2$.
The size of $\mathcal {P }$ is at most $k$ since at least one vertex must be deleted from each $3$-path.
In the worst case scenario, when there are $k$ 3-paths in $\mathcal{P}$, $k$ size-2 components, and $k$ size-1 components in $G[Q]$,
the graph has at most $3k + 2k + k = 6k$ vertices.

Our parameterized algorithm is a branch-and-search algorithm.
We first handle vertices with degrees of 1 and 2, followed by those with degree at least 5.
Next, we only need to deal with degree-3 and degree-4 vertices.
The different part from previous algorithms is as follows. We use refined rules to deal with degree-2, degree-3, and degree-4 vertices by carefully checking the local structures. Therefore, we can avoid previous bottlenecks.


\section{Preliminaries}
In this paper, we only consider simple and undirected graphs.
Let $G=(V, E)$ be a graph with $n=|V|$ vertices and $m=|E|$ edges.
A singleton $\{v\}$ may be denoted as $v$.
We use $V(G')$ and $E(G')$ to denote the vertex set and edge set of a graph $G'$, respectively.
A vertex $v$ is called a \emph{neighbor} of a vertex $u$ if there is an edge $\{u,v\} \in E$.
Let $N(v)$ denote the set of neighbors of $v$. For a vertex subset $X$, let $N(X)=\cup_{v\in X}N(v)\setminus X$ and $N[X]=N(X)\cup X$.
We use $d(v)=|N(v)|$ to denote the degree of a vertex $v$ in $G$.
A vertex of degree $d$ is called a degree-$d$ vertex.
For a vertex subset $X\subseteq V$, the subgraph induced by $X$ is denoted by $G[X]$, and $G[V\setminus X]$ is also written as $G\setminus X$ or $G-X$.
A vertex in a vertex subset $X$ is called an \emph{$X$-vertex}.
Two vertex-disjoint subgraphs $X_1$ and $X_2$ are \emph{adjacent} if there is an edge $\{u,v\} \in E$ with $u\in X_1$ and $v\in X_2$.
A graph is called an \emph{induced matching} if the size of each connected component in it is two.
A vertex subset $S$ is called an \emph{AIM-deletion set} of $G$ if $G\setminus S$ is an induced matching.

A \emph{$3$-path} $P_3 = \{u_1,u_2,u_3\}$ is a path with two edges $\{u_1,u_2\}$ and $\{u_2,u_3\}$.
Two $3$-paths $L_1$ and $L_2$ are \emph{vertex-disjoint} if $V(L_{1}) \cap V(L_{2}) = \emptyset$.
A set of 3-paths $\mathcal{P} = \{L_1,L_2,...,L_t\}$ is called a \emph{$P_3$-packing} if any two 3-paths in it are vertex-disjoint.
A $P_3$-packing is \emph{maximal} if there is no $P_3$-packing $\cal P'$ such that $|\mathcal{P}|<|\mathcal{P'}|$ and $\mathcal{P} \subset \mathcal{P'}$.
We also use $V(\mathcal{P})$ to denote the set of vertices in 3-paths in a $P_3$-packing $\mathcal{P}$.

We will use a variant of the classic \emph{VC crown decomposition}.
Now, we give the definition of VC crown decomposition \cite{DBLP:conf/alenex/Abu-KhzamCFLSS04,chor2004linear} for the ease of reference.


\begin{definition}\label{VC-decomposition}
    A \textit{VC crown decomposition} of a graph $G = (V, E)$ is a partition $(C, H, R)$ of the vertex set $V$ satisfying the following properties.
    \begin{enumerate}
        \item There is no edge between $C$ and $R$.
        \item $C$ is an independent set.
        \item There is an injective mapping (matching) $M: H\rightarrow C$ such that $\{x,M(x)\}\in E$ holds for all $x\in H$.
    \end{enumerate}
\end{definition}


\begin{lemma}[\cite{chor2004linear}]\label{VC-lemma}
    If graph $G = (V, E)$ has an independent set $I \subseteq V$ with $|I| > |N(I)|$, then a VC crown decomposition $(C, H, R)$ with $\emptyset \neq  C \subseteq I$ and $H \subseteq N(I)$ can be found in linear time.
\end{lemma}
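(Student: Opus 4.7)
The plan is to reduce the statement to a bipartite matching argument, in the spirit of the original Chor--Fellows--Juedes proof. First I would restrict attention to the bipartite graph $B$ whose two sides are $I$ and $N(I)$ and whose edges are exactly those of $G$ between these two sets; since $I$ is independent, $B$ captures every edge incident to $I$. Next I would compute a maximum matching $M$ in $B$. Because $|I| > |N(I)| \geq |M|$, at least one vertex of $I$ is left unmatched; call this nonempty set $U$.

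Then I would iteratively close $U$ under alternating paths. Set $C_0 := U$ and, as long as some $v \in N(C_i)$ is matched by $M$ to a vertex of $I \setminus C_i$, add that partner to form $C_{i+1}$. The process stabilizes in at most $|I|$ rounds; let $C$ be the final set and define $H := N(C)$ and $R := V \setminus (C \cup H)$. The three defining properties of a VC crown decomposition then fall out. Property (2) is immediate because $C \subseteq I$ is independent, and $C \neq \emptyset$ since $U \subseteq C$. For property (1), any neighbor of $C$ lies in $N(I)$ and hence, by construction of $H$, inside $H$, so no edge joins $C$ to $R$. For property (3), every $h \in H$ must be matched by $M$ --- otherwise one could trace an $M$-alternating path from some unmatched vertex in $U \subseteq C$ out to $h$, yielding an augmenting path that contradicts the maximality of $M$ --- and the closure rule forces the $M$-partner of $h$ to lie in $C$, so $M$ restricted to $H$ is the required injection $H \to C$.

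The main obstacle is the linear running time, because a maximum bipartite matching is not computable in linear time in general. The standard workaround, following Abu-Khzam et al., is to avoid computing a maximum matching globally: first compute a maximal (not maximum) matching $M_0$ in $G$ in linear time, observe that the set of $M_0$-unmatched vertices is an independent set that interacts nicely with the hypothesis $|I|>|N(I)|$, and then use a single bounded alternating-path expansion instead of full augmentation. The delicate step will be arguing that the deficiency guaranteed by $|I| > |N(I)|$ is large enough to let one replace the expensive global max-matching step by a constant number of linear-time scans, so that the entire decomposition --- the alternating-path closure, the definition of $H$, and the extraction of $R$ --- can be organized as a breadth-first search on $B$ running in $O(|V|+|E|)$ time overall.
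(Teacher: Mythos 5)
The paper offers no proof of this lemma at all --- it is imported verbatim from Chor, Fellows, and Juedes \cite{chor2004linear} --- so there is nothing in-paper to compare against and your proposal has to stand on its own. Its combinatorial core does stand: restricting to the bipartite graph between $I$ and $N(I)$, taking a maximum matching $M$, observing that $|I|>|N(I)|\ge |M|$ leaves a nonempty exposed set $U\subseteq I$, and closing $U$ under $M$-alternating paths is the standard construction. Your verification of the three properties is sound: $C\subseteq I$ is independent and nonempty, every neighbor of $C$ lies in $N(I)$ and is absorbed into $H$, and the absence of augmenting paths from $U$ forces every $h\in H$ to be matched, with its partner pulled into $C$ by the closure rule --- exactly the injection $H\to C$ required by Definition~\ref{VC-decomposition}. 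This establishes existence and a polynomial-time algorithm, which is in fact all this paper uses downstream (Lemma~\ref{AIM-crown-lemma-2} only claims polynomial time for the AIM crown decomposition).

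The genuine gap is the linear-time bound, and your final paragraph is a plan rather than a proof. If you replace the maximum matching by a maximal matching $M_0$, augmenting paths between exposed vertices of $I$ and exposed vertices of $N(I)$ may survive, and then the alternating-path closure can reach an unmatched vertex of $H$, destroying precisely the property (every $H$-vertex matched into $C$) on which your injection rests. Restoring optimality by repeated augmentation costs $O(n+m)$ per augmenting path and may need up to $|N(I)|$ rounds, which is not linear, and the assertion that the deficiency $|I|-|N(I)|$ ``interacts nicely'' with $M_0$ is not yet an argument that a constant number of linear-time scans suffices. So as written your proposal proves the lemma with ``polynomial time'' in place of ``linear time''; to obtain the stated bound you would have to carry out that deficiency argument in detail or reproduce the cited construction, although nothing in this paper's kernelization would be affected by settling for the weaker running-time claim.
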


\section{Kernelization}

In this section, we show that \textsc{Almost Induced Matching} allows a kernel of $6k$ vertices.
The main idea of our algorithm is as follows.
The first step of this algorithm is to find a maximal $P_3$-packing $\mathcal {P}$ in $G=(V,E)$ by using a greedy method.
For a \textbf{yes}-instance, we will have $|\mathcal{P}| \leq k$.
We partition the vertex set $V$ into two parts  $P=V (\mathcal{P})$ and $Q = V\setminus P$. Each connected component in $G[Q]$ is of size at most 2 by the maximality of $\mathcal{P}$.
We will bound the number of components in $G[Q]$ to bound the size of $Q$.


Let $Q_0$ denote the set of degree-0 vertices in $G[Q]$, and $Q_1$ denote the set of degree-1 vertices in $G[Q]$.
Use \emph{$Q_1$-edge} to denote the components with size 2 in $G[Q]$.
For each $L_i \in \mathcal{P}$, let $Q(L_i)$ denote the set of $Q$-vertices in the components of $G[Q]$ adjacent to $L_i$.
Let $V_i$ denote $Q(L_i)\cup V(L_i)$.
It should be noted that a vertex in $Q(L_i)$ might not be adjacent to any vertex in $L_i$.

A 3-path $L_i\in \mathcal{P}$ is \emph{good} if at most one vertex in $L_i$ is adjacent to $Q$-vertices.
A 3-path $L_i\in \mathcal{P}$ is \emph{bad} if at least two vertices in $L_i$ are adjacent to $Q$-vertices.
A maximal $P_3$-packing is \emph{proper} if it holds that $|V_i| \leq 6$ for any bad 3-path $L_i$ in it.

In our kernelization algorithm, we will first find an arbitrary maximal $P_3$-packing $\mathcal{P}$. After this, we will use two rules in \cite{xiao2020parameterized}
to update $\mathcal{P}$.

\begin{xrule} \label{rule_rule1}
If there is a 3-path $L_i \in \mathcal{P}$ such that  $G[V_i]$ contains at least two vertex-disjoint 3-paths, then replace $L_i$ by these 3-paths in $\mathcal{P}$ to increase the size of $\mathcal{P}$ by at least one.
\end{xrule}

\begin{lemma}[\cite{xiao2020parameterized}]\label{size7}
    Assume that Rule~\ref{rule_rule1} can not be applied on the current instance. For any $L_i \in \mathcal{P}$ with $|V_i|\geq 7$,
    there is a 3-path $L'_i$ in $G[V_i]$ such that $L'_i$ is a good 3-path after replacing $L_i$ with $L'_i$ in $\mathcal{P}$.
    Furthermore, the 3-path $L'_i$ can be found in constant time.
\end{lemma}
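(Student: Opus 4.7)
My plan is to prove Lemma~\ref{size7} via a structural case analysis on how $Q(L_i)$ attaches to $L_i = u_1 u_2 u_3$, using that the inapplicability of Rule~\ref{rule_rule1} forces $G[V_i]$ to contain no two vertex-disjoint $3$-paths. First I would reduce the definition of ``good'' to a local combinatorial condition on $V_i$. Let $\mathcal{P}'$ denote the packing obtained by replacing $L_i$ with a candidate $L'_i \subseteq V_i$, and $Q'$ the resulting $Q$-set. For any $v \in V(L'_i)$, each $G$-neighbor of $v$ lying outside $V_i$ must belong to some $L_j$ with $j \neq i$ and hence to $V(\mathcal{P}')$, never to $Q'$; so the $Q'$-neighbors of $v$ are exactly its neighbors in $V_i \setminus V(L'_i)$. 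Thus $L'_i$ is good iff at most one of its three vertices has a neighbor in $V_i \setminus V(L'_i)$.

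The guiding idea is to ``trap'' two of the three vertices of $L'_i$ inside $V(L'_i)$ by absorbing a pendant piece of $Q(L_i)$. I would try two selection rules in order. \emph{Rule A}: look for a size-$2$ $Q(L_i)$-component $\{x,y\}$ all of whose $L_i$-neighbors are concentrated at a single vertex $u_j$; then a 3-path through $x,y,u_j$ (e.g.\ $y - x - u_j$ when $x \sim u_j$, and symmetrically otherwise) traps $x$ and $y$ inside $V(L'_i)$, so only $u_j$ can see outside, making $L'_i$ good. \emph{Rule B}: look for a vertex $u_j$ carrying two distinct singleton $Q$-neighbors $q, q'$ each of which has $u_j$ as its only $L_i$-neighbor; then $L'_i := q - u_j - q'$ traps $q$ and $q'$ and is good. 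Both rules are implemented by scanning a constant number of $Q$-components incident to $L_i$ and examining a constant number of adjacencies, which gives the claimed constant-time construction.

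The crux, and the step I expect to be hardest, is showing that when $|V_i| \geq 7$ (equivalently, $|Q(L_i)| \geq 4$) at least one of the two rules must fire. I would argue by contradiction: assume both fail. Then every size-$2$ component of $Q(L_i)$ has $L_i$-attachments spread over at least two distinct vertices of $\{u_1,u_2,u_3\}$, and no $u_j$ possesses two singleton $Q$-neighbors both restricted to $u_j$. Using $|Q(L_i)| \geq 4$ together with the fact that $G[Q(L_i)]$ is a disjoint union of single vertices and single edges, I would pigeonhole on $\{u_1, u_2, u_3\}$ to produce in each of a small number of sub-cases an explicit pair of vertex-disjoint 3-paths in $G[V_i]$: typically one is a pendant-style 3-path such as $y - x - u_j$ or $q - u_j - u_{j'}$ anchored at one end of $L_i$, and the other uses the opposite end of $L_i$ together with another $Q$-component. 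Any such pair contradicts the inapplicability of Rule~\ref{rule_rule1}. The main technical difficulty is organizing this sub-case enumeration compactly, but each individual configuration is small enough that the disjoint pair can be exhibited by inspection.
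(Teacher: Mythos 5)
Note first that the paper itself does not prove Lemma~\ref{size7}; it is imported from \cite{xiao2020parameterized}, so your argument has to stand on its own --- and in substance it does. Your reduction of ``good'' to the local condition that at most one vertex of $L'_i$ has a neighbour in $V_i\setminus V(L'_i)$ is valid: any neighbour of a $V_i$-vertex lying outside $V_i$ belongs to some $L_j$ with $j\neq i$ (a $Q$-neighbour of a vertex of $L_i$ lies in a component adjacent to $L_i$, hence in $Q(L_i)$, and a $Q$-neighbour of a vertex of $Q(L_i)$ lies in the same $G[Q]$-component), so it stays in $V(\mathcal{P}')$ after the swap. Rules A and B do produce good replacement paths, exactly as you argue. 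The crux you flagged --- that with $|Q(L_i)|\geq 4$ and no two vertex-disjoint $3$-paths in $G[V_i]$ one of A, B must fire --- is in fact true, and the enumeration is shorter than you anticipate. (i) If some size-$2$ component $D$ exists, failure of A forces $D$ to attach to two distinct $L_i$-vertices, hence to some endpoint $u_e$; then $D\cup\{u_e\}$ spans a $3$-path, the remaining two $L_i$-vertices are adjacent, so any other component attaching to either of them yields a disjoint $3$-path. Hence all other components attach only to $u_e$; by failure of A they are all singletons private to $u_e$, and there are at least two of them, so B fires. (ii) If all components are singletons ($\geq 4$ of them): should an endpoint have two singleton neighbours, argue as in (i); otherwise each endpoint has at most one singleton neighbour, so at least two singletons are private to $u_2$ and B fires. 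So your contradiction step completes.

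The one soft spot is the constant-time claim. As written, ``scanning the $Q$-components incident to $L_i$'' could take linear time, since $|Q(L_i)|$ is unbounded even when $G[V_i]$ has no two disjoint $3$-paths (e.g.\ arbitrarily many components hanging only on $u_2$). To repair this, observe that the case analysis above uses only the components it inspects: it applies verbatim to any sub-collection of components of $Q(L_i)$ containing at least four vertices, and either exhibits two disjoint $3$-paths inside the inspected part (impossible) or finds an A/B structure there. Hence it suffices to examine at most four components adjacent to $L_i$, which gives the claimed constant-time construction.
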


We call the 3-path $L'_i$ in Lemma~\ref{size7} a \emph{quasi-good 3-path}.
\begin{xrule}\label{rule_rule2}
    For any 3-path $L_i \in \mathcal{P}$ with $|V_i|\geq 7$, if it is not good, then replace $L_i$ with a quasi-good 3-path $L'_i$ in $\mathcal{P}$.
\end{xrule}



\begin{lemma}[\cite{xiao2020parameterized}] \label{lem_proper0}
For any initially maximal $P_3$-packing $\mathcal{P}$ in $G$, we can apply Rules~\ref{rule_rule1} and~\ref{rule_rule2} in $O(n^2)$ time to change $\mathcal{P}$ to a proper $P_3$-packing.
\end{lemma}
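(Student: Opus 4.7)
The plan is to apply Rules~\ref{rule_rule1} and~\ref{rule_rule2} exhaustively and to argue that the whole procedure terminates after $O(n)$ rule applications and producing a proper $P_3$-packing, with each application implemented in $O(n)$ amortized time. Termination will be established via a monotone progress measure, and the per-iteration cost will be charged against straightforward local inspections of the sets $V_i$.

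First I would bound the number of applications of Rule~\ref{rule_rule1}. Each such application strictly increases $|\mathcal{P}|$ by at least one, and since each 3-path uses three distinct vertices, $|\mathcal{P}| \le n/3$ throughout the procedure. Hence Rule~\ref{rule_rule1} can fire at most $n/3$ times in total, regardless of how it is interleaved with Rule~\ref{rule_rule2}.

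The main obstacle is bounding the number of applications of Rule~\ref{rule_rule2}. I would introduce the lexicographic potential $\Phi(\mathcal{P}) = (|\mathcal{P}|, g(\mathcal{P}))$, where $g(\mathcal{P})$ counts the good 3-paths in $\mathcal{P}$, and show that every rule application strictly increases $\Phi$ in lexicographic order. Rule~\ref{rule_rule1} bumps the first coordinate. Rule~\ref{rule_rule2} leaves the first coordinate fixed and, by Lemma~\ref{size7}, replaces a bad $L_i$ with a good $L'_i$, so the second coordinate should increase by one. The subtlety is that the replacement pushes the two vertices in $V(L_i)\setminus V(L'_i)$ into $Q$, which could add $Q$-neighbors to some $L_j$ and turn $L_j$ bad. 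I would handle this with a local analysis: only the paths $L_j$ already containing those vertices in their $V_j$ can be affected, and the constructive guarantee of Lemma~\ref{size7} can be strengthened to pick the quasi-good $L'_i$ so that no previously good $L_j$ is demoted. Since $g(\mathcal{P}) \le |\mathcal{P}| \le n/3$, this yields $O(n)$ applications of Rule~\ref{rule_rule2} as well.

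Finally, for the running-time bound, detecting whether Rule~\ref{rule_rule1} or~\ref{rule_rule2} applies to a given $L_i$ reduces to inspecting $V_i$, which takes $O(|V_i|)=O(n)$ time: one either searches $G[V_i]$ for two vertex-disjoint 3-paths (for Rule~\ref{rule_rule1}) or extracts a quasi-good 3-path via the constructive version of Lemma~\ref{size7} (for Rule~\ref{rule_rule2}). A naive sweep over all $|\mathcal{P}|=O(n)$ paths therefore costs $O(n^2)$, and since only $O(n)$ rule applications occur, the total work is bounded by maintaining a queue of candidate paths whose $V_i$ has changed since their last inspection, amortizing the scanning cost to $O(n^2)$. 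When neither rule applies, every 3-path with $|V_i|\ge 7$ must be good, so every bad 3-path has $|V_i|\le 6$, i.e., $\mathcal{P}$ is proper, as required.
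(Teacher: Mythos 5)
This lemma is quoted from \cite{xiao2020parameterized} and the paper gives no proof of its own, so your argument can only be judged on its merits --- and it has a genuine gap at exactly the point you flag but do not resolve. Your termination bound for Rule~\ref{rule_rule2} rests on the lexicographic potential $(|\mathcal{P}|, g(\mathcal{P}))$, and for the second coordinate to increase you need that a Rule~\ref{rule_rule2} replacement never demotes a previously good 3-path $L_j$ to bad. Lemma~\ref{size7} gives no such guarantee: it only asserts the \emph{existence} of some $L'_i$ inside $G[V_i]$ that is good after the swap. The vertices of $V(L_i)\setminus V(L'_i)$ that fall into $Q$ may be adjacent to two distinct vertices of another, previously good, $L_j$, and whether this happens is determined by edges between $L_i$ and $L_j$, i.e.\ by structure entirely outside $V_i$; the choice of $L'_i$ within $G[V_i]$ cannot control it. Indeed, all quasi-good candidates may be forced to drop the same problematic vertex of $L_i$ (for instance when that vertex has no neighbor in $Q(L_i)$ and its only $V_i$-neighbor cannot be extended to a quasi-good path containing it). So the claimed ``strengthening'' of Lemma~\ref{size7} is not a routine refinement but the crux of the lemma, and without it your potential need not increase, leaving the $O(n)$ bound on the number of Rule~\ref{rule_rule2} applications --- and hence the $O(n^2)$ running time --- unproven. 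Note that demotions are not automatically fatal to \emph{properness} (a bad $L_j$ with $|V_j|\le 6$ is allowed), but they are fatal to your particular progress measure, so a different measure or a cascade analysis is needed.

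A secondary omission: ``good'', ``bad'' and ``proper'' are defined only for a \emph{maximal} $P_3$-packing, so you must also argue that each application of Rules~\ref{rule_rule1} and~\ref{rule_rule2} preserves maximality (equivalently, that every component of $G[Q]$ still has at most two vertices after the dropped vertices of $L_i$ join $Q$). Your proposal never addresses this, yet both your final sentence (``every bad 3-path has $|V_i|\le 6$, i.e., $\mathcal{P}$ is proper'') and the later kernel counting depend on it.
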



This lemma implies that the number of $Q$-vertices in the components of $G[Q]$ adjacent to bad 3-paths in $\mathcal {P}$ is small.
So we only need to bound the number of $Q_0$-vertices and $Q_1$-edges adjacent to good 3-paths in $\mathcal {P}$.
For any proper $P_3$-packing $\mathcal {P}$ in $G$, we can bound the number of $Q_0$-vertices adjacent to good 3-paths by the following Lemma \ref*{prop2} from \cite{xiao2020parameterized}.

\begin{lemma}[\cite{xiao2020parameterized}]\label{prop2}
    Let $\mathcal {P }$ be a proper $P_3$-packing in graph $G $, where the number of bad 3-paths is $x $.
    If $(G, k)$ is a \textbf{yes}-instance, then the number of $Q_0 $-vertices only adjacent to good 3-paths (not adjacent to bad 3-paths) in $\mathcal {P} $ is at most $k-x$.
\end{lemma}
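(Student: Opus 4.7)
The plan is to fix an AIM-deletion set $S$ of size at most $k$ and partition it as $S_b\cup S_g\cup S_Q$, where the three parts consist of the $S$-vertices lying in bad 3-paths of $\mathcal{P}$, in good 3-paths of $\mathcal{P}$, and in $Q$, respectively. A simple first observation is that every 3-path of $\mathcal{P}$ must contain at least one vertex of $S$: otherwise its middle vertex would have degree $\ge 2$ in $G-S$, contradicting the induced-matching condition. In particular $|S_b|\ge x$, so it suffices to construct an injection $\phi$ from the set $T$ of $Q_0$-vertices adjacent only to good 3-paths into $S_g\cup S_Q$; combining with $|S_b|\ge x$ gives $|T|+x\le|S|\le k$, which is the claim.

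Define $\phi$ on each $v\in T$ by cases. If $v\in S$, put $\phi(v):=v$, which lies in $S_Q$ since $v\in Q$. Otherwise $v\notin S$, and as $G-S$ is an induced matching, $v$ has exactly one neighbor $u(v)$ in $G-S$. Since $v\in Q_0$ has no neighbor in $Q$, $u(v)$ lies in $V(\mathcal{P})$; and since $v$ is adjacent only to good 3-paths, $u(v)$ lies in a good 3-path $L_v$. Because $u(v)$ is adjacent to the $Q$-vertex $v$ and a good 3-path has at most one vertex adjacent to $Q$, this $u(v)$ is the unique such ``port'' of $L_v$. The first-paragraph observation gives $V(L_v)\cap S\ne\emptyset$; since $u(v)\notin S$, this intersection lies in $V(L_v)\setminus\{u(v)\}$, and one defines $\phi(v)$ to be any fixed element of it, which belongs to $S_g$.

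Injectivity is the only remaining step. Cross-case collisions are impossible because the two cases map into the disjoint sets $S_Q$ and $S_g$, and the first case is trivially injective. For the second case, suppose $\phi(v)=\phi(v')$ for distinct $v,v'\in T\setminus S$. Then vertex-disjointness of the 3-paths of $\mathcal{P}$ forces $L_v=L_{v'}$, and since this good 3-path has a unique port, $u(v)=u(v')$. But this common port would then have two distinct neighbors $v,v'$ in $G-S$, contradicting the fact that each vertex of an induced matching has degree exactly $1$. The one thing that needs careful verification in this plan is that $u(v)$ is really the port of a good 3-path, which follows immediately from $v\in T$; everything else is a clean charging argument.
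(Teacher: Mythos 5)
Your proof is correct. Note that the paper does not prove Lemma~\ref{prop2} itself but cites it from \cite{xiao2020parameterized}; your argument --- each 3-path of $\mathcal{P}$ must contain a deleted vertex (so the bad paths already account for $x$ deletions), and the surviving $Q_0$-vertices in $T$ inject into deleted vertices of $S_g\cup S_Q$ via the unique port of a good 3-path --- is exactly the standard charging argument behind that cited result, and all steps (existence of the port, uniqueness, and the degree-1 contradiction for injectivity) check out against the paper's definitions.
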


The main contribution in this paper is to bound the number of $Q_1$-edges adjacent to good 3-paths.
We need to use the following technique called \emph{AIM crown decomposition}.

\begin{definition}[AIM crown decomposition]\label{def-decomposition}
    An \textit{AIM crown decomposition} of a graph $G=(V,E)$ is a decomposition $(C, H, R) $ of the vertex set $V$ such that
    \begin{enumerate}
        \item there is no edge between $C $ and $R$;
        \item the induced subgraph $G[C] $ is an induced matching;
        \item there is an injective mapping (matching from vertices to edges) $M: H \rightarrow E(C)$ such that for all $v\in H$, there exists $u\in M(v)$ such that $(u,v)\in E$.

    \end{enumerate}
\end{definition}

Fig. \ref*{Fig:1} illustrates an AIM crown decomposition.
We have the following Lemma \ref*{AIM-crown-lemma} for AIM crown decomposition, which allows us to find parts of the solution based on an AIM crown decomposition.

\begin{figure}[!t]
    \centering
    \includegraphics[scale=0.20]{./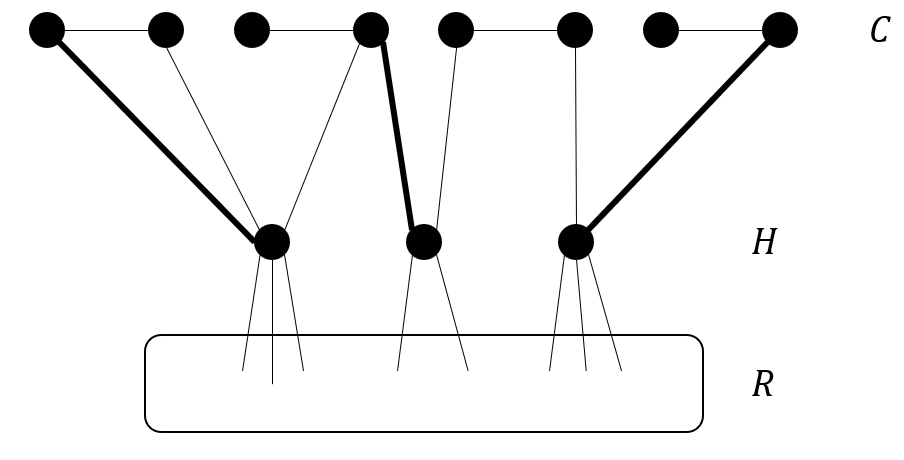}
    \caption{An AIM crown decomposition $(C, H, R)$, where the three bold edges form an injective matching from $H$ to $C$.}
    \label{Fig:1}
\end{figure}

\begin{lemma}\label{AIM-crown-lemma}
    Let $(C, H, R)$ be an AIM crown decomposition of a graph $G$.
    There is a minimum AIM-deletion set $S$ such that $H\subseteq S$.
\end{lemma}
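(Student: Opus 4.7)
The plan is to take any minimum AIM-deletion set $S^{*}$ and transform it into a minimum AIM-deletion set $S$ that contains $H$. Let $X := H \setminus S^{*}$ and let $T \subseteq R \setminus S^{*}$ be the set of $R$-vertices whose unique neighbour in the induced matching $G \setminus S^{*}$ lies in $X$; these are exactly the $R$-vertices that would become isolated under a naive deletion of $H$. Define $S := (S^{*} \setminus C) \cup H \cup T$. By Property~1 of Definition~\ref{def-decomposition}, $V \setminus S = C \sqcup (R \setminus (S^{*} \cup T))$, so $G \setminus S$ splits as $G[C] \sqcup G[R \setminus (S^{*} \cup T)]$; Property~2 handles the first piece, and the second is exactly $G[V \setminus S^{*}]$ restricted to $R$ with its isolated vertices $T$ pruned, hence itself an induced matching. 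So validity of $S$ is immediate once the size bound is in place.

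The heart of the proof is the size inequality $|S| \le |S^{*}|$, which rearranges to $|S^{*} \cap C| \ge |X| + |T|$. A weaker bound $|S^{*} \cap C| \ge |X|$ is standard: for each $v \in X$ with $M(v) = \{a, b\}$ and $v$ adjacent to $a$, the induced-matching condition forces $\{a, b\} \cap S^{*} \neq \emptyset$, since otherwise either $v$ would have two neighbours in $G \setminus S^{*}$ (namely $a$ and its matching partner) or $a$ would (namely $v$ and $b$); injectivity of $M$ turns these witnesses into $|X|$ pairwise disjoint primary credits inside $S^{*} \cap C$. To extract the additional $|T|$ credits I would set up an auxiliary digraph $D$ on $X$, placing an arc $v \to w$ whenever $M(v)$ has exactly one endpoint in $S^{*}$ and that non-$S^{*}$ endpoint is matched in $G \setminus S^{*}$ to $w \in X$. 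Injectivity of $M$ together with the matching condition force both in- and out-degrees in $D$ to be at most one, so $D$ decomposes into vertex-disjoint paths and cycles. At the end of every path, the absence of an outgoing arc forces both endpoints of the corresponding $M$-edge into $S^{*}$, which yields one bonus credit per path. Every vertex of $X$ whose matching partner lies in $R$ has no incoming arc in $D$, so it either starts or constitutes a path; hence $D$ has at least $|T|$ paths, producing the needed $|T|$ bonus credits.

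The main obstacle will be the disjointness bookkeeping in this last step: arguing that the bonus credit from each path-end is genuinely distinct from the $|X|$ primary credits and from the other bonus credits. This hinges on the fact that every $C$-vertex belongs to a unique edge of the induced matching $G[C]$, combined with injectivity of $M$. A small but easy-to-overlook subtlety is the case in which some $v \in X$ is matched to an endpoint of $M(v)$ itself; this would create a self-loop in $D$ and is cleanest to peel off in advance as an independent swap (add $v$, delete the other endpoint of $M(v)$, which must then lie in $S^{*}$), after which the path-and-cycle decomposition of $D$ applies cleanly to the remaining vertices of $X$.
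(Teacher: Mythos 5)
Your proposal is correct, and the exchange it performs is exactly the paper's: your $S=(S^{*}\setminus C)\cup H\cup T$ is the same set as the paper's $S'=(S^{*}\setminus C)\cup H_2\cup N_{G\setminus S^{*}}(H_R)$ (your $X$ is the paper's $H_2$, your $T$ is its $N_{G\setminus S^{*}}(H_R)$), and your validity argument and the reduction of minimality to $|S^{*}\cap C|\ge |X|+|T|$ match the paper's. Where you genuinely differ is the proof of that inequality. The paper argues by a global count over the auxiliary set $H_2\cup C_2\cup N_{G\setminus S^{*}}(H_R)$, where $C_2$ collects the $C$-edges losing at least one endpoint: every surviving vertex of that set is matched inside it to an $H_2$-vertex, so it contains at most $|H_2|$ matching edges of $G\setminus S^{*}$, and combining this with $|V(C_2)|\ge 2|H_2|$ and the fact that $S^{*}$ meets the set only inside $C_2$ yields the bound. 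You instead charge locally: each edge $M(v)$, $v\in X$, loses at least one endpoint (the same observation as the paper's ``$M(v)\in C_2$''), and your functional digraph on $X$, having in- and out-degree at most one and hence decomposing into paths and cycles with every $v\in X$ matched into $R$ being a path start, certifies that at least $|T|$ of these edges lose both endpoints; vertex-disjointness of the edges $M(v)$ (injectivity of $M$ plus $G[C]$ being an induced matching) makes the credits sum to $|X|+|T|\le|S^{*}\cap C|$. Both routes are sound; yours makes explicit, edge by edge, the structural facts that the paper compresses into the claim about ``at most $|H_2|$ induced matchings in $G'[T\setminus S^{*}]$'', at the price of the bookkeeping you flag. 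One small simplification: the self-loop case needs no special peeling, since a self-loop vertex lies on a cycle rather than a path and is not matched into $R$, so it neither claims nor blocks a bonus credit and the path count is unaffected.
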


\begin{proof}
    Let $S^*$ be an arbitrary minimum AIM-deletion set of $G$.
    Let $G'=G\setminus S^* $. Note that $G'$ is an induced matching.
    We partition $H$ into two sets $H_1$ and $H_2$, where $H_1 = H \cap S^*$ and $H_2 = H \setminus H_1$.
    So all vertices in $H_2$ are in the induced matching $G'$.
    We further partition $H_2$ into two sets $H_C $ and $H_R$, where each vertex in $H_C$ is adjacent to a vertex in $C \cup H$ in the induced matching $G'$ and each vertex in $H_R$ is adjacent to a vertex in $R $ in the induced matching $G'$.
    Let $N_{G\setminus S^*}(H_R)$ be the set of vertices adjacent to vertices in $H_R$ in the induced matching $G'$.


    We partition $C$ into two sets $C_1  $ and $C_2$, where two vertices of each edge in $C_1$ are both in $G'$ and $C_2 = C \setminus C_1 $.
    Let $T = H_2\cup C_2\cup N_{G\setminus S^*}(H_R)$.
    See Fig. \ref*{Fig:2} for an illustration.

    \begin{figure}[t]
        \centering
        \includegraphics[scale=0.23]{./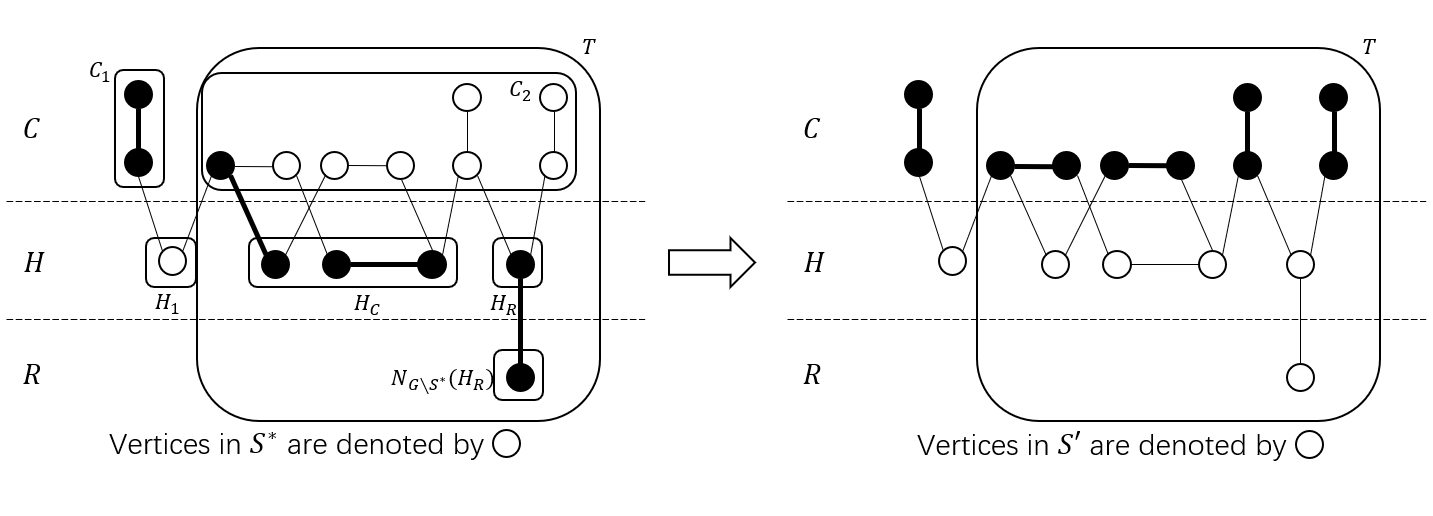}
        \caption{Sets $S^*$ and $S'$ in the proof of Lemma \ref*{AIM-crown-lemma}.}
        \label{Fig:2}
    \end{figure}

    Let $S'=(S^*\setminus C)\cup H_2 \cup N_{G\setminus S^*}(H_R)$.
    Firstly, we show that $S'$ is still an AIM-deletion set of $G$.
    See Fig. \ref*{Fig:2} for an illustration of $S^*$ and $S'$.
    It is easy to see that $H\subseteq S'$.

    Let $S_1=S^* \setminus (H\cup C)$.
    We can see that $S_1$ is an AIM-deletion set of the induced subgraph $G\setminus (H\cup C \cup N_{G\setminus S^*}(H_R))$ by definition of $S^* $ and $N_{G\setminus S^*}(H_R)$.
    So $S_1\cup H\cup N_{G\setminus S^*}(H_R)$ is an AIM-deletion of origin graph $G $.
    We can see that $S_1\cup H\cup N_{G\setminus S^*}(H_R) = S'$.
    Then, we show that $|S'|\leq |S^*|$ to finish our proof.

    By definition of $S'$, we have that
    \begin{equation}
        |S'|=|S^*|-|S^*\cap C|+|H_2|+|N_{G\setminus S^*}(H_R)|. %
    \end{equation}

    Note that, $H_2\cap S^*=\emptyset,N_{G\setminus S^*}(H_R) \cap S^* =\emptyset, C_1 \cap S^* = \emptyset$ because of their definitions. We have that
    \begin{equation}
        |S^*\cap T|=|S^*\cap C_2|=|S^*\cap C|. %
    \end{equation}

    Recall the definition of $T = H_2\cup C_2\cup N_{G\setminus S^*}(H_R)$.
    Since $H$ is an $C$-$R$ cutset, vertices in $C_2 $ can not be adjacent to $V\setminus T $ in $G'$, and $H_2, N_{G\setminus S^*}(H_R)$ can not be adjacent to $V\setminus T $ in $G' $ by their definitions.
    We can claim that there are at most $|H_2|$ induced matchings in $G'[T\setminus S^*]$.
    We get that
    \begin{equation}
        |S^*\cap T|\ge|T|-2|H_2|. %
    \end{equation}

    Note that all edges in $C$ adjacent to $H_2$ must be in $C_2$, so we have $\forall v\in H_2, M(v)\in C_2$, so $|C_2|\ge|H_2|,|V(C_2)|\ge 2|H_2|$.
    By the definition of $T $, we have that
    \begin{equation}
        |H_2|+|N_{G\setminus S^*}(H_R)|=|T|-|V(C_2)|\leq |T|-2|H_2|. %
    \end{equation}

    Equality (2) with inequalities (3) and (4) together imply that
    \begin{equation}
        |H_2|+|N_{G\setminus S^*}(H_R)|\leq |S^*\cap C|. %
    \end{equation}

    Equality (1) with inequality (5) together imply that
    \[
        |S'|=|S^*|-|S^*\cap C|+|H_2|+|N_{G\setminus S^*}(H_R)|\leq|S^*|-|S^*\cap C|+|S^*\cap C|=|S^*|. %
    \]

\end{proof}

Once given an AIM crown decomposition $(C, H, R)$ of the graph, we can reduce the instance by including $H$ to the deletion set and removing $H \cup C$ from the graph by Lemma \ref*{AIM-crown-lemma}.
Next, we show that when the number of $Q_1$-edges is large, we can always find an AIM crown decomposition in polynomial time.

\begin{lemma}\label{AIM-crown-lemma-2}
    Let $G = (V, E)$ be a graph with each connected component containing more than two vertices, and $A$ and $B\subseteq V$ be two disjoint vertex sets such that\\
    (i) no vertex in $A$ is adjacent to a vertex in $V\setminus (A \cup B)$;\\
    (ii) the induced subgraph $G[A]$ is an induced matching.\\
    If $|A| > 2|B|$, then the graph allows an AIM crown decomposition $(C, H, R)$ with $\emptyset \neq C \subseteq A$ and $H \subseteq B$, and the AIM crown decomposition $(C, H, R)$ can be found in polynomial time.
\end{lemma}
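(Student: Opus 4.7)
The plan is to reduce the lemma to the classical VC crown decomposition (Lemma \ref{VC-lemma}) by building an auxiliary bipartite graph whose two sides are the matching edges of $G[A]$ and the vertices of $B$.

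First, I would observe that, by the paper's convention, ``induced matching'' means every connected component has exactly two vertices, so condition (ii) together with the assumption that no component of $G$ is a single edge forces $G[A]$ to be a perfect matching. In particular $|E(G[A])| = |A|/2$, and the hypothesis $|A| > 2|B|$ becomes $|E(G[A])| > |B|$.

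Next I would construct an auxiliary bipartite graph $\widetilde{G}$ on vertex set $E(G[A]) \cup B$, putting an edge between $e \in E(G[A])$ and $b \in B$ whenever $b$ is adjacent in $G$ to at least one endpoint of $e$. The set $I := E(G[A])$ is independent in $\widetilde{G}$ (bipartiteness), and since $N_{\widetilde{G}}(I) \subseteq B$, we have $|I| > |B| \geq |N_{\widetilde{G}}(I)|$. Lemma \ref{VC-lemma} then yields, in linear time, a VC crown decomposition $(\widetilde{C}, \widetilde{H}, \widetilde{R})$ of $\widetilde{G}$ with $\emptyset \neq \widetilde{C} \subseteq E(G[A])$, $\widetilde{H} \subseteq B$, and an injective matching $\widetilde{M}\colon \widetilde{H} \to \widetilde{C}$.

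I would then lift this back to $G$ by defining $C := V(\widetilde{C})$ (the endpoints of the chosen matching edges), $H := \widetilde{H}$, and $R := V \setminus (C \cup H)$, and set $M := \widetilde{M}$, which is an injective map from $H$ into $E(C) = \widetilde{C}$ satisfying Definition \ref{def-decomposition}(3) by construction of $\widetilde{G}$. Since $\widetilde{C}$ is a subset of the matching $E(G[A])$, $G[C]$ itself is the induced matching consisting of exactly the edges in $\widetilde{C}$, giving condition (2); and $C \subseteq A$ is nonempty. The step I expect to require the most care is verifying condition (1), that no edge of $G$ joins $C$ to $R$. I would split $R$ into three parts and argue separately: if $w \in R \setminus (A\cup B)$, hypothesis (i) forbids any edge from $C \subseteq A$ to $w$; if $w \in R \cap A$, then $w$'s only possible $A$-neighbor of $u \in C$ would be $u$'s partner in $G[A]$, but that partner also lies in $C$; and if $w \in R \cap B$, then an edge $\{u,w\}$ with $u$ an endpoint of some $e \in \widetilde{C}$ would produce an edge in $\widetilde{G}$ between $\widetilde{C}$ and $\widetilde{R}$ (note $w \notin \widetilde{H}$ forces $w \in \widetilde{R}$), contradicting the VC crown property. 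The polynomial running time is inherited directly from Lemma \ref{VC-lemma} and the straightforward construction of $\widetilde{G}$.
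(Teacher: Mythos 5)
Your proposal is correct and follows essentially the same route as the paper: contract each matching edge of $G[A]$ to a vertex of an auxiliary bipartite graph against $B$, invoke Lemma~\ref{VC-lemma} using $|E(G[A])|>|B|$, and lift the resulting VC crown decomposition back to an AIM crown decomposition of $G$. Your case analysis for condition (1) (splitting $R$ into $R\setminus(A\cup B)$, $R\cap A$, and $R\cap B$) is in fact more explicit than the paper's one-line justification, but it is the same argument.
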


\begin{proof}
   Firstly, we construct an auxiliary bipartite graph $G' = (V', E')$ with $V' = (A', B')$ as follows:
   Each vertex in $A'$ corresponds to a component (an edge) in the induced matching $G[A]$, and each vertex in $B'$ corresponds to a vertex in the vertex set $B$.
   And a vertex $a \in A'$ is adjacent to a vertex $b \in B'$ if and only if the component in $G[A]$ corresponding to $a$ is adjacent to the vertex $b$ in $G$.
   Note that $2|A'| = |A|$ and $|B'| = |B|$. If $|A| > 2|B|$, then $|A'| > |B'|$.

   Since the induced subgraph $G[A]$ is an induced matching, $G'[A']$ is an independent set.
   Hence $G'$ has a VC crown decomposition $(C', H', R')$ with $C'\subseteq A'$ that can be found in linear time by Lemma \ref*{VC-lemma}.

   Associate $(C', H', R')$ to the decomposition $(C, H, R)$ of $V$, where $C$ is the components corresponding to the vertices in $C'$, $H$ is the vertices corresponding to the vertices in $H'$, $R = V \setminus (H \cup C)$.
   Let us check the three conditions in the definition of AIM crown decomposition in $(C, H, R)$.
   First, there is no edge between $C$ and $R$ since there is no edge between $C'$ and $R'$. Second, the induced subgraph $G[C]$ is an induced matching since  $C'$ is an independent set.
   Last, note that there is an injective mapping (matching) $M': H'\rightarrow C'$ such that, $\forall x\in H', \{x, M'(x)\}\in E'$. In the corresponding graph $G$, there is an injective mapping $M: H \rightarrow E(C)$ such that for all $v\in H$, there exists $u\in M(v)$ such that $(u,v)\in E$.

   Since $C'\neq \emptyset$, we can see $C \neq \emptyset$.
   So $(C, H, R)$ is an AIM crown decomposition of $V$ with $\emptyset \neq C \subseteq A$ and $H \subseteq B$, and this decomposition can be found in polynomial time.
\end{proof}

If the number of $Q_1$-edges only adjacent to vertices of good 3-paths in $\mathcal{P}$ is large,
we use Lemma \ref*{AIM-crown-lemma} and Lemma \ref*{AIM-crown-lemma-2} to reduce the instance.
Our algorithm first finds two vertex-disjoint sets of vertices, $A$ and $B$, which satisfy the condition in Lemma \ref*{AIM-crown-lemma-2} based on a proper $P_3$-packing $\mathcal{P}$.
Let $A$ be the set of $Q_1$-vertices that are only adjacent to good 3-paths in $\mathcal{P}$.
Let $B$ be the set of vertices in good 3-paths that are adjacent to some vertices in $A$.
If $|A| > 2|B|$, we can find an AIM crown decomposition $(C, H, R)$ by Lemma \ref*{AIM-crown-lemma-2},
and we can reduce the instance by including $C$ to the deletion set and removing $C \cup H$ from the graph.

Our algorithm, denoted by ${\tt Reduce}(G, k)$, is described in Fig. \ref{fig_algorithm}.

\begin{figure}[!t]
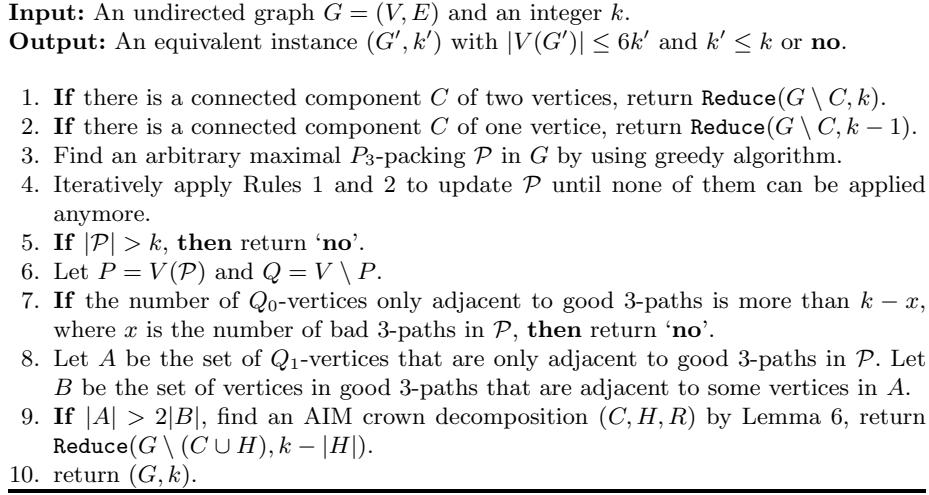

    \noindent \rule{\linewidth}{0.4mm}\\
    {\bf Input:} An undirected graph $G=(V,E)$ and an integer $k$.\\
    {\bf Output:}
    An equivalent instance $(G', k')$ with $|V(G')|\leq 6k'$ and $k'\leq k$ or \textbf{no}.

    \begin{enumerate}
        \item \textbf{If} there is a connected component $C$ of two vertices, return ${\tt Reduce}(G\setminus C, k)$.

        \item \textbf{If} there is a connected component $C$ of one vertice, return ${\tt Reduce}(G\setminus C, k - 1)$.


        \item Find an arbitrary maximal $P_3$-packing $\mathcal{P}$ in $G$ by using greedy algorithm.


        \item Iteratively apply Rules~1 and 2 to update $\mathcal{P}$ until none of them can be applied anymore.
        \item \textbf{If} $|\mathcal{P}| > k$, \textbf{then}  return `\textbf{no}'.
        \item Let $P=V(\mathcal{P})$ and $Q=V\setminus P$.
        \item \textbf{If} the number of $Q_0$-vertices only adjacent to good 3-paths is more than $k-x$,
        where $x$ is the number of bad 3-paths in $\mathcal{P}$, \textbf{then}  return `\textbf{no}'.
        \item Let $A$ be the set of $Q_1$-vertices that are only adjacent to good 3-paths in $\mathcal{P}$.
        Let $B$ be the set of vertices in good 3-paths that are adjacent to some vertices in $A$.
        \item \textbf{If} $|A| > 2|B|$, find an AIM crown decomposition $(C,H,R)$ by Lemma \ref*{AIM-crown-lemma-2},
        return ${\tt Reduce}(G\setminus (C\cup H),k-|H|)$.
        \item return $(G, k)$.
    \end{enumerate}

    \vspace{-6mm}

    \rule{\linewidth}{0.4mm}
    \caption{Algorithm ${\tt Reduce}(G,k)$}\label{fig_algorithm}
\end{figure}

\begin{lemma}\label{correctness}
    Algorithm ${\tt Reduce}(G, k)$ runs in polynomial time,
    and returns either an equivalent instance $(G', k')$ with $|V(G')|\leq 6k'$ and $k'\leq k$ or \textbf{no} to indicate that the instance is a \textbf{no}-instance.
\end{lemma}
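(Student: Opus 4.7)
The plan is to verify three claims about ${\tt Reduce}(G,k)$: polynomial running time, equivalence (the returned instance has the same yes/no answer as the input, with $k'\leq k$), and the size bound $|V(G')|\leq 6k'$ at termination. I will proceed by induction on $|V|+k$; the recursive calls in steps 1, 2, and 9 each strictly decrease $|V|$, bounding the recursion depth by $O(|V|)$, so I need only verify that the non-recursive work is polynomial, each reduction is correct, and the base case reached at step 10 meets the size bound.

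For running time, I walk through each step: the greedy maximal $P_3$-packing in step 3 is polynomial; Rules~1 and~2 take $O(n^2)$ time by Lemma~\ref{lem_proper0}; counting $Q$-components and classifying them by adjacency (steps 5--8) is polynomial; and step 9's AIM crown decomposition is polynomial by Lemma~\ref{AIM-crown-lemma-2}. For correctness: step 1 removes a size-2 component that is already an induced-matching component, preserving both $k$ and the answer; step 2 removes an isolated vertex, which must lie in every AIM-deletion set (otherwise it would survive as an isolated vertex in the output), so $k$ drops by one; step 5 outputs \textbf{no} because any AIM-deletion set must delete at least one vertex from each vertex-disjoint 3-path of $\mathcal{P}$; step 7 is Lemma~\ref{prop2}; and step 9 combines Lemmas~\ref{AIM-crown-lemma-2} and~\ref{AIM-crown-lemma}: the pair $(A,B)$ satisfies the hypotheses of Lemma~\ref{AIM-crown-lemma-2} (no $A$-vertex is adjacent to $V\setminus(A\cup B)$ by construction, and $G[A]$ is an induced matching composed of $Q_1$-edges), so the AIM crown decomposition $(C,H,R)$ with $\emptyset\neq C\subseteq A$ and $H\subseteq B$ is produced, and Lemma~\ref{AIM-crown-lemma} guarantees a minimum AIM-deletion set containing $H$, justifying removal of $C\cup H$ with $k$ decreased by $|H|$.

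The hard part is the size bound when execution reaches step 10. Let $x$ denote the number of bad 3-paths in $\mathcal{P}$; by step 5 we have $|\mathcal{P}|\leq k$, and the number of good 3-paths is $|\mathcal{P}|-x$. I will partition $Q_0$ into $Q_0^g$ (vertices only adjacent to good 3-paths) and $Q_0^b$ (vertices adjacent to at least one bad 3-path), and partition the $Q_1$-edges analogously into $Q_1^g$ (both endpoints only adjacent to good 3-paths) and $Q_1^b$ (some endpoint adjacent to a bad 3-path). Step 7 yields $|Q_0^g|\leq k-x$. Because a good 3-path has at most one vertex adjacent to any $Q$-vertex, $|B|\leq |\mathcal{P}|-x$, and since step 9 did not fire, $2|Q_1^g|=|A|\leq 2|B|$, giving $|Q_1^g|\leq |\mathcal{P}|-x$. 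For bad 3-paths, properness of $\mathcal{P}$ ensures $|Q(L_i)|\leq 3$ for each bad $L_i$, so summing over the $x$ bad 3-paths (each $Q$-vertex adjacent to a bad 3-path is counted at least once) yields $|Q_0^b|+2|Q_1^b|\leq 3x$. Combining,
\[
|V|=3|\mathcal{P}|+|Q_0^g|+|Q_0^b|+2|Q_1^g|+2|Q_1^b|\leq 3|\mathcal{P}|+(k-x)+2(|\mathcal{P}|-x)+3x=5|\mathcal{P}|+k\leq 6k,
\]
which completes the proof.
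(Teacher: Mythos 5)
Your proof is correct and follows essentially the same route as the paper: correctness of each step via Lemmas~\ref{prop2}, \ref{AIM-crown-lemma} and \ref{AIM-crown-lemma-2}, followed by a vertex count at Step~10 using $|\mathcal{P}|\leq k$, the $k-x$ bound on $Q_0$-vertices, the failure of Step~9 to bound the good-side $Q_1$-edges, and properness ($|V_i|\leq 6$, hence $|Q(L_i)|\leq 3$) for the bad side. The only difference is cosmetic bookkeeping (the paper charges bad 3-paths together with their adjacent $Q$-components as $6x$, you split this as $3x+3x$ inside $3|\mathcal{P}|$), and your "$2|Q_1^g|=|A|$" should strictly be "$\leq$", which is the direction you need anyway.
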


\begin{proof}
    First, let us consider the correctness of each step.
    Steps 1-6 are trivial cases.
    Step 7 is based on Lemma \ref*{prop2} and
    Step 9 is based on Lemma \ref*{AIM-crown-lemma}.
    Step 8 is also trivial. Next, we consider Step 10.
    In this step, $\mathcal{P}$ is a proper $P_3$-packing and then the number of vertices in $P$ is at most $3k$.
    Assume that the number of bad 3-paths in $\mathcal{P}$ is $x$ and the number of good 3-paths in $\mathcal{P}$ is $y$.
    By the definition of proper $P_3$-packing, we know that the number of vertices in bad 3-paths and in $Q$-components adjacent to some bad 3-paths is at most $6x$.
    The number of vertices in good 3-paths is $3y$, the number of $Q_0$ -vertices only adjacent to good 3-paths is at most $k - x$ by Lemma \ref*{prop2}, and the number of $Q_1$-vertices only adjacent to good 3-paths is at most $2y$ by Lemma \ref*{AIM-crown-lemma} and Lemma \ref*{AIM-crown-lemma-2}.
    In total, the number of vertices in the graph is at most $ 6x + 3y + (k - x) + 2y = k + 5x + 5y \leq 6k$.
    Thus, we can get that $|V| \leq 6k$ in Step 10.

    Each step in ${\tt Reduce}(G, k)$ runs in polynomial time.
    Since each recursive call of ${\tt Reduce}(G, k)$ decreases $|V|$ by at least 1,
    ${\tt Reduce}(G, k)$ will be called at most $|V|$ times.
    Thus, ${\tt Reduce}(G, k)$ runs in polynomial time.
\end{proof}



The following Theorem 1 directly follows from Lemma~\ref*{correctness}.

\begin{theorem}
    \textsc{Almost Induced Matching} admits a kernel with $6k$ vertices.
\end{theorem}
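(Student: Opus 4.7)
The plan is to invoke Algorithm ${\tt Reduce}(G,k)$ as the kernelization procedure and rely on Lemma~\ref{correctness} to certify correctness and size. Since all the real work (construction of a proper $P_3$-packing, application of Lemmas~\ref{prop2},~\ref{AIM-crown-lemma}, and~\ref{AIM-crown-lemma-2}, and the vertex-count argument $|V|\le 6x+3y+(k-x)+2y \le 6k$) has already been carried out inside the algorithm and its correctness proof, the theorem itself should be little more than a packaging step.

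First, I would recall the standard definition of a kernel: a polynomial-time computable mapping from the original instance $(G,k)$ to an equivalent instance $(G',k')$ whose size is bounded by a function of the parameter, here $|V(G')|\le 6k'$ with $k'\le k$. Then I would note that on input $(G,k)$, Algorithm ${\tt Reduce}(G,k)$ either outputs \textbf{no} (in which case we can replace the instance with any trivial \textbf{no}-instance of constant size, which vacuously meets the $6k$ bound) or outputs an equivalent instance $(G',k')$ with $|V(G')|\le 6k'\le 6k$, by Lemma~\ref{correctness}. In either case, the output instance has at most $6k$ vertices.

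Second, I would invoke the polynomial-time guarantee in Lemma~\ref{correctness} to conclude that the whole procedure runs in polynomial time in $|V(G)|+k$, which is all that is required of a kernelization algorithm. Equivalence (in the \textbf{yes}/\textbf{no} sense) follows because every reduction step inside ${\tt Reduce}$ (removing an isolated edge, deleting an isolated vertex while decrementing $k$, rejecting when $|\mathcal{P}|>k$ or when too many $Q_0$-vertices are adjacent only to good 3-paths, and the AIM crown step) has been individually justified in the excerpt via Lemmas~\ref{prop2} and~\ref{AIM-crown-lemma}.

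There is essentially no remaining obstacle: the only thing that could go wrong is a mismatch between what Lemma~\ref{correctness} guarantees and what ``kernel of $6k$ vertices'' formally means, so the one sentence I would take care to write cleanly is the bridge from ``equivalent instance with $|V(G')|\le 6k'\le 6k$'' to the kernelization definition. The proof is therefore a two-line invocation of Lemma~\ref{correctness} followed by appealing to the definition of a kernel.
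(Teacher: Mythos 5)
Your proposal matches the paper exactly: the theorem is stated as an immediate consequence of Lemma~\ref{correctness}, with the kernelization being Algorithm ${\tt Reduce}(G,k)$ and the size bound $|V(G')|\le 6k'\le 6k$ coming directly from that lemma. Your extra care in handling the \textbf{no} output (replacing it by a trivial constant-size \textbf{no}-instance) and in bridging to the formal kernel definition is a correct and slightly more explicit packaging of the same argument.
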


\section{A parameterized algorithm}

In this section, we design a parameterized algorithm for \textsc{Almost Induced Matching}, which is a branch-and-search algorithm.
We use a parameter $k$ to measure the instance, and $T(k)$ to denote the maximum size of the search tree generated by the algorithm when running on an instance with the parameter no greater than $k$.
Assume that a branching operation generates $l$ branches and the measure $k$ in the $i$-th instance decreases by at least $c_i$. This operation generates a recurrence relation
\[
    T(k)\leq T(k-c_1)+T(k-c_2)+...+T(k-c_l)+1.
\]

The largest root of the function $f(x)=1-\sum_{i=1}^lx^{-c_i}$ is called the \textit{branching factor} of the recurrence.
Let $\gamma$ denote the maximum branching factor among all branching factors in the search tree.
The running time of the algorithm is bounded by $O^*(\gamma^k)$~\cite{kratsch2010exact}.

\subsection{Branching rules}

We have several branching rules that will be applied in different steps.


\vspace{2mm}
\noindent\textbf{Branching-Rule (B1).}
\textit{
    Branch on $v$ to generate $|N[v]|$ branches by
    either (i) deleting $v$ from the graph and including it in the deletion set,
    or (ii) for each neighbor $u$ of $v$,
    deleting $N[\{u, v\}]$ from the graph and including $N(\{u, v\})$ in the deletion set.
}
\vspace{2mm}

When dealing with certain graph structures, we can use a more effective branching rule.
A vertex $v$ \textit{dominates} its neighbor $u$ if $N[u] \subseteq N[v]$.
A vertex $v$ is called a \textit{dominating} vertex if it dominates at least one vertex.
The following property of dominating vertices has been used in \cite{gupta2012maximum,xiao2017exact}.

\begin{lemma}\label{dominating-lemma}
    Let $v$ be a vertex that dominates a vertex $u$. If there is a maximum induced matching $M$ of $G$ such that $v \in V (M)$, then there is a maximum induced matching $M'$ of $G$ such that edge $vu \in M'$.
\end{lemma}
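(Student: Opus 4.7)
The plan is a standard exchange argument. Take a maximum induced matching $M$ with $v \in V(M)$, and let $vw$ be the edge of $M$ incident to $v$. If $w = u$ we are done, so assume $w \neq u$. Note first that because $v$ dominates $u$ we have $u \in N[u] \subseteq N[v]$ and $u \neq v$, so $vu$ is an edge of $G$.

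I would then argue in two steps. First, I claim $u \notin V(M)$: otherwise $u$ would be matched in $M$ to some $u' \neq v$, but then $v$ and $u$ are both in $V(M)$ and adjacent, so $G[V(M)]$ would contain $vu$ in addition to $vw$ and $uu'$, violating the definition of an induced matching. Second, form the candidate $M' = (M \setminus \{vw\}) \cup \{vu\}$; clearly $|M'| = |M|$ and the edges of $M'$ are pairwise vertex-disjoint, so the only thing left to check is that $G[V(M')]$ has no edges outside $M'$.

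This last verification is the heart of the argument, and the only point where the domination hypothesis is actually used. The vertex added to $V(M)$ in passing to $V(M')$ is $u$, so any spurious edge of $G[V(M')]$ must be of the form $ux$ with $x \in V(M')\setminus\{u,v\} \subseteq V(M)\setminus\{v,w\}$. Then $x \in N(u) \subseteq N[v]$, and since $x \neq v$ we get $x \in N(v)$, so $xv$ is an edge of $G$ whose endpoints both lie in $V(M)$ with $x \neq w$; this contradicts $M$ being an induced matching, exactly as in the first step. I do not foresee any obstacle beyond bookkeeping in this case analysis: once it is carried out, $M'$ is an induced matching of the same cardinality as $M$, hence maximum, and it contains the edge $vu$ as required.
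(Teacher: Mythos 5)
Your proof is correct: the exchange argument (replace the edge $vw$ by $vu$, using $N[u]\subseteq N[v]$ to show that any neighbour of $u$ inside the new vertex set would already have been a forbidden neighbour of $v$ in $G[V(M)]$) is exactly the standard argument behind this lemma, which the paper itself does not reprove but imports from \cite{gupta2012maximum,xiao2017exact}. The bookkeeping you flag (spurious edges not incident to $u$ lie inside $V(M)\setminus\{w\}$ and are therefore edges of $M\setminus\{vw\}\subseteq M'$) goes through without issue, so nothing is missing.
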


We can use Lemma~\ref{dominating-lemma} to design an effective branching rule.

\vspace{2mm}
\noindent\textbf{Branching-Rule (B2).}
\textit{
    Assume that vertex $v$ dominates vertex $u$.
    Branch on $v$ to generate two instances by either (i) deleting $v$ from the graph and including it in the deletion set, or (ii) deleting $N[\{u, v\}]$ from the graph and including $N(\{u, v\})$ in the deletion set.
}
\vspace{2mm}

If there is a degree-2 vertex $v$ in a triangle, a more effective branching rule can be applied.
The following lemma has been used in \cite{xiao2017exact}.

\begin{lemma}\label{degree-2-lemma}
    If there is a degree-2 vertex $v$ in a triangle $vu_1u_2$,
    then there is a maximum induced matching either containing one edge in $\{vu_1, vu_2\}$ or containing no edge incident on a vertex in $\{v, u_1, u_2\}$.
    Especially, if at least one vertex in $\{u_1, u_2\}$ is of degree at least 3, then there is a maximum induced matching containing one edge in $\{vu_1, vu_2\}$.
\end{lemma}

We can use Lemma~\ref{degree-2-lemma} to design an effective branching rule to deal with degree-2 vertices in triangles.

\vspace{2mm}
\noindent\textbf{Branching-Rule (B3).}
\textit{
    Branch on a degree-2 vertex $v$ in a triangle with two neighbors $u_1$ and $u_2$ as follows \\
    (i) if $d(u_1)\leq 3$ or $d(u_2) \leq 3$, then generate two instances by either (a) deleting $N[\{v, u_1\}]$ from the graph and including $N(\{v, u_1\})$ in the deletion set,
    or (b) deleting $N[\{v, u_2\}]$ from the graph and including $N(\{v, u_2\})$ in the deletion set; \\
    (ii) if $d(u_1)\geq 4$ and $d(u_2) \geq 4$, then generate three instances by
    either (a) deleting $\{v, u_1, u_2\}$ from the graph and including them in the deletion set,
    (b) deleting $N[\{v, u_1\}]$ from the graph and including $N(\{v, u_1\})$ in the deletion set,
    or (c) deleting $N[\{v, u_2\}]$ from the graph and including $N(\{v, u_2\})$ in the deletion set.
}

\subsection{The algorithm}

We will use ${\tt aim}(G,k)$ to denote our parameterized algorithm.
Before executing the main branching steps, the algorithm will first apply some reduction rules to simplify the instance.
First of all, we call the kernel algorithm ${\tt Reduce}(G, k)$ to reduce the instance, which can be considered as Reduction-Rule 0.
We also have three more reduction rules.

\begin{rrule}\label{rrule-4}
    If there is a connected component of the graph such that each vertex in it is a degree-2 vertex,
    then select an arbitrary vertex $v$ in this component and return ${\tt aim}(G\setminus \{v\},k-1)$.
\end{rrule}

\begin{rrule}\label{rrule-5}
    If there is a degree-1 vertex $v$ with a degree-2 neighbor $u$,
    then return ${\tt aim}(G\setminus N[\{v,u\}],k-1)$.
\end{rrule}

Reduction-Rule~\ref{rrule-4} is trivial. Reduction-Rule~\ref{rrule-5}'s correctness is based on the observation: there is always a maximum induced matching containing edge $\{v, u\}$.

A cycle $u_0u_1u_2u_3$ of four vertices is called a \textit{short cycle} if the two vertices $u_0$ and $u_3$ are of degree at least 2 and the two vertices $u_1$ and $u_2$ are of degree 2.

\begin{lemma}\label{short-cycle-lemma}
    If a graph $G$ has a short cycle $u_0u_1u_2u_3$, then there is a maximum induced matching of $G$ containing the edge $u_1u_2$.
\end{lemma}

\begin{proof}
    Let $i\in \{0,3\}$ and $i'=\{0,3\}\setminus \{i\}$.
    If no edge incident on $u_i$ is in a maximum induced matching $M$, then by the maximality of $M$ we know that $u_1u_2$ is in $M$.
    Next, we assume that an edge $e$ incident on $u_i$ is in a maximum induced matching $M$.
    If $e=u_0u_3$, then we can replace $e$ with $u_1u_2$ in $M$.
    If $e\neq u_0u_3$, then edge $u_1u_2$ is not in $M$ and $u_{i'}\not\in V(M)$. For this case,
    we can also replace $e$ with $u_1u_2$ in $M$ to get a maximum induced matching containing $u_1u_2$.
    There is always a maximum induced matching containing $u_1u_2$.
\end{proof}

\begin{rrule}\label{rrule-6}
    If there is a short cycle $u_0u_1u_2u_3$, then return ${\tt aim}(G \setminus \{u_0,u_1,u_2,u_3\},k - 2)$.
\end{rrule}

Now, we are ready to introduce the main branching steps of the algorithm.
The algorithm contains nine steps to handle different local structures of the graph.
Dominating vertices are processed in Step 1, while Steps 2 to 5 are dedicated to handling degree-2 vertices.
Vertices with at least five neighbors are handled in Step 6,
and the last three steps focus on graphs with only degree-3/4 vertices.
When executing a step, we assume that all previous steps are not applicable to the current graph.

\vspace{2mm}

\noindent\textbf{Step 1} (Dominating vertices of degree at least 3).
    If there is a vertex $v$ of degree at least 3 that dominates a vertex $u$, then branch on $v$ with Rule (B2) to generate two branches
\[
    {\tt aim}(G \setminus \{v\},k-1) \quad \mbox{and} \quad {\tt aim}(G \setminus N[\{v,u\}],k-|N(\{v,u\})|).
\]

Lemma \ref*{dominating-lemma} guarantees the correctness of this step.
Note that $|N(\{v, u\})|$ = $d(v) - 1$.
This step generates a recurrence
\[
    T(k) \leq T(k - 1) + T(k - (d(v) - 1)) + 1,
\]
where $d(v) \ge 3$.
For the worst case where $d(v) = 3$, the branching factor of it is 1.6181.

After Reduction-Rule 2, degree-1 vertices can only be adjacent to vertices of degree at least 3.
These vertices will be handled in Step 1.
So after Step 1, there are no degree-1 vertices in $G$.

Next, we consider degree-2 vertices.
A path $u_0u_1u_2u_3u_4$ of five vertices is called a \textit{chain} if the first vertex $u_0$ is of degree at least 3 and the three middle vertices are of degree 2, where we allow $u_4$ = $u_0$.
A path $u_0u_1u_2u_3$ of four vertices is called a \textit{short chain} if the first vertex $u_0$ and last vertex $u_3$ are of degree at least 3 and the two middle vertices are of degree 2, where we allow $u_3$ = $u_0$.
A chain or a short chain can be found in linear time if it exists.
A short chain $u_0u_1u_2u_3$ is called a \textit{good short chain} if $u_3\neq u_0$ and there is no edge between $u_3$ and $u_0$.

\begin{lemma}\label{good-short-chain-lemma}
    If a graph $G$ has a good short chain $u_0u_1u_2u_3$, then there is a maximum induced matching of $G$ containing at least one vertex of $u_1$ and $u_2$.
\end{lemma}

\begin{proof}
    If none of $u_0$ and $u_3$ appears in a maximum induced matching $M$, then by the maximality of $M$, we know that $u_1u_2$ is in $M$.
    If only one of $u_0$ and $u_3$, say $u_0$, appears in the maximum induced matching $M$, we can always replace the edge incident on $u_0$ in $M$ with edge $u_0u_1$ to get another maximum induced matching containing $u_1$.
    If both $u_0$ and $u_3$ appear in a maximum induced matching $M$, while both $u_1$ and $u_2$ are not,
    we can also replace the edge incident on $u_0$ in $M$ with edge $u_0u_1$ to get another maximum induced matching.
    Consequently, there is a maximum induced matching containing at least one vertex in $u_1$ and $u_2$.
\end{proof}

\noindent\textbf{Step 2} (Chains).
    If there is a chain $u_0u_1u_2u_3u_4$, then branch on $u_1$ with Rule (B1).
    In the branch where $u_1$ is deleted and included in the deletion set, we get a tail $u_2$ and then further deal with the tail as we do in Reduction-Rule 2.
    Then we get the following three branches
    \[
        \begin{split}
        {\tt aim}(G \setminus \{u_1, u_2, u_3, u_4\},k-2),&\quad {\tt aim}(G\setminus N[\{u_0,u_1\}],k-|N(\{u_0,u_1\})|),\\
          \mbox{and}  & \quad {\tt aim}(G\setminus N[\{u_1,u_2\}],k-|N(\{u_1,u_2\})|).
        \end{split}
    \]

The corresponding recurrence is

\[
    T (k) \leq T (k - 2) + T (k - d(u_0)) + T (k - 2) + 1, %
\]
where $d(u_0) \ge 3$.
For the worst case where $d(u_0) = 3$, the branching factor of it is 1.6181.

After Step 1, there is no short chain $u_0u_1u_2u_3$ with $u_0 = u_3$, since for any short chain $u_0u_1u_2u_3$ with $u_0 = u_3$, vertex $u_0$ is a dominating vertex, and Step 1 would be applied.
After Reduction-Rule 3, no short chain $u_0u_1u_2u_3$ exists with an edge $u_0u_3$ in $G$.
Therefore, we claim that every short chain in $G$ is a good short chain after Step 1.

  \vspace{2mm}
\noindent\textbf{Step 3} (Short chains).
If there is a short chain $u_0u_1u_2u_3$, then branch on $u_1$ with Rule (B1). Additionally, by Lemma \ref*{good-short-chain-lemma}, in the branch of deleting $u_1$, we can delete $N[\{u_2, u_3\}]$ from the graph and include $N(\{u_2, u_3\})$ in the deletion set.
Thus, we get the following three branches
\[
    \begin{split}
        {\tt aim}(G\setminus N[\{u_2,u_3\}],k-|N(\{u_2,u_3\})|), &\quad {\tt aim}(G\setminus N[\{u_0,u_1\}],k-|N(\{u_0,u_1\})|),\\
        \mbox{and}  & \quad {\tt aim}(G\setminus N[\{u_1,u_2\}],k-|N(\{u_1,u_2\})|).
    \end{split}
\]

Note that $|N(\{u_2, u_3\})| = d(u_2) +d(u_3) - 2 = d(u_3), |N(\{u_0, u_1\})| = d(u_0) +d(u_1) - 2 = d(u_0) \mbox { and } |N(\{u_1, u_2\})| = d(u_1) + d(u_2) - 2 = 2$.
The corresponding recurrence is
\[
  T (k) \leq T (k - d(u_3)) + T (k - d(u_0)) + T (k - 2) + 1,
\]
where $d(u_0)$ and $d(u_3) \ge 3$.
For the worst case where $d(u_0) = d(u_3) = 3$, the branching factor is 1.5214.

After Step 3, each degree-2 vertex in the graph has two neighbors of degree at least 3.
Since there is no dominating vertex after Step 1, the two neighbors of any degree-2 vertex are not adjacent to each other.

\vspace{2mm}
\noindent\textbf{Step 4} (Degree-2 vertices adjacent to a vertex of degree at least 4).
    If there is a degree-2 vertex $v$ adjacent to a vertex $u_1$ of degree at least 4,
    then branch on $v$ with Rule (B1) to generate $d(v) + 1$ branches
    \[
        {\tt aim}(G \setminus \{v\},k - 1) \quad \mbox{and} \quad {\tt aim}(G \setminus N[\{v, u\}],k - |N(\{v, u\})|) \mbox{~for each $u\in N(v)$}.
    \]

Let $u_2$ denote the other neighbor of $v$.
Then $u_2$ is not adjacent to $u_1$ and $d(u_2) \geq 3$.
The branching operation will generate three branches. Since $u_1$ and $u_2$ are not adjacent, we can see that $|N(\{v, u_1\})| = d(u_1) + d(v) - 2 = d(u_1) \ge 4$ and $|N(\{v, u_2\})| = d(u_2) + d(v) - 2 = d(u_2) \ge 3$.
This leads to a recurrence

\[
    T (k) \leq T (k - 1) + T (k - d(u_1)) + T (k - d(u_2)) + 1,
\]
where $d(u_1) \geq 4$ and $d(u_2) \geq 3$.
For the worst case where $d(u_1) = 4$ and $d(u_2) = 3$, the branching factor is 1.6181.
\vspace{2mm}

After Step 4, if there is a degree-2 vertex $v$ adjacent to two vertices $u_1$ and $u_2$, then $d(u_1)\geq 3$ and $d(u_2) \geq 3$.
Vertex $u_1$ is not adjacent to $u_2$ since there is no dominated vertex.
Let $N(u_1)=\{v,w_1,w_2\}$ and $N(u_2)=\{v,w_3,w_4\}$, where it is possible that $\{w_1,w_2\}\cap \{w_3,w_4\} \neq \emptyset$.
In Step 5, we are going to deal with such degree vertices $v$.

\vspace{2mm}
\noindent\textbf{Step 5} (Degree-2 vertices with two nonadjacent degree-3 neighbors).
    We deal with such degree-2 vertices $v$ by considering two different cases.

    Case 1. $|\{w_1,w_2\}\cap \{w_3,w_4\}| \geq 1$: We branch on $w_1$ with Rule (B1) to generate $d(w_1) + 1$ branches.

    \begin{figure}[t]
        \centering
        \includegraphics[scale=0.2]{./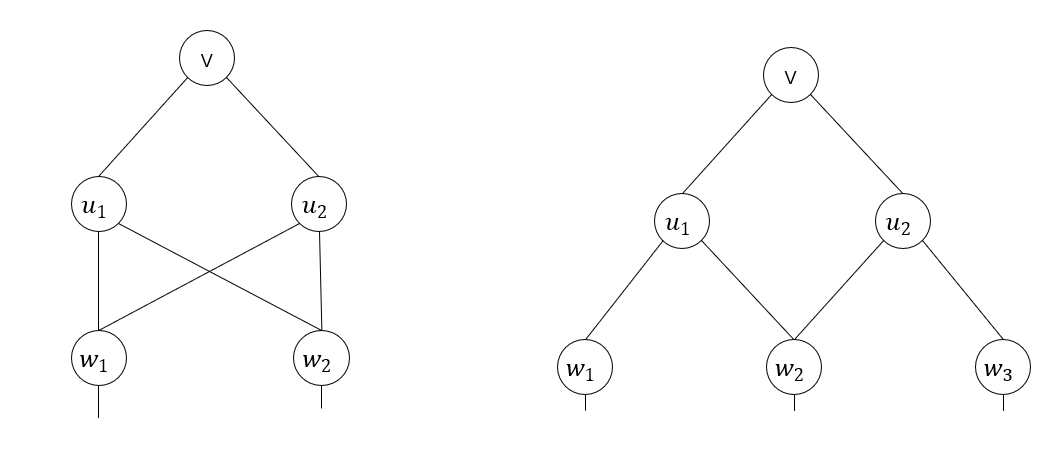}
        \caption{Step 5. Case 1.}
        \label{Fig:Step-5-Case-1}
    \end{figure}

    For this case, we assume without loss of generality $w_2 = w_4$.
    See Fig. \ref*{Fig:Step-5-Case-1} for an illustration.
    In the branch where $w_1$ is deleted and included in the deletion set, we get a short cycle $w_2u_1vu_2$ and then further deal with the short cycle as we do in Reduction-Rule 3.
    Then we get the following three branches
    \[
        \begin{split}
        & {\tt aim}(G \setminus \{w_1, w_2, u_1, v, u_2\},k - 3) \\
        \mbox{and} \quad & {\tt aim}(G \setminus N[\{w_1, x\}],k - |N(\{w_1, x\})|) \mbox{~for each $x\in N(w_1)$}.
        \end{split}
    \]

    If $d(w_1) = 2$, let $x_1$ and $x_2$ denote the two neighbors of $w_1$.
    We know that $x_1$ and $x_2$ are nonadjacent vertices of degree at least 3. So $w_1$ is not a dominating vertex.
    If $d(w_1) \geq 3$, according to Step 1, we know that $w_1$ is not a dominating vertex.
    Thus, each neighbor $x$ of $w_1$ is adjacent to at least one vertex out of $N[w_1]$ and then $|N(\{w_1, x\})| \ge d(w_1)$.
    We get a recurrence
    \[
        T (k) \leq T (k - 3) + d(w_1)\times T (k - d(w_1)) + 1,
    \]
    where $d(w_1) \geq 2$.
    For the worst case where $d(w_1) = 2$, the branching factor is 1.6181.

    Case 2. $|\{w_1,w_2\}\cap \{w_3,w_4\}| = 0$:
    Firstly, if $N(\{v,u_1,u_2,w_1,w_2,w_3,w_4\})=\emptyset$, we delete $\{v,u_1,u_2,w_1,w_2,w_3,w_4\}$ from the graph and include $\{v,w_2,w_4\}$ in the deletion set. The correctness of this operation can be easily verified.
    Next, we assume without loss of generality that $N(\{v,u_1,u_2,w_1,w_2,w_3,w_4\})\neq \emptyset$. Let $x^*$ be a vertex adjacent to $w_1$.
    See Fig. \ref*{Fig:Step-5-Case-2} for an illustration. One of the following three cases will happen.

    \begin{figure}[t]
        \centering
        \includegraphics[scale=0.25]{./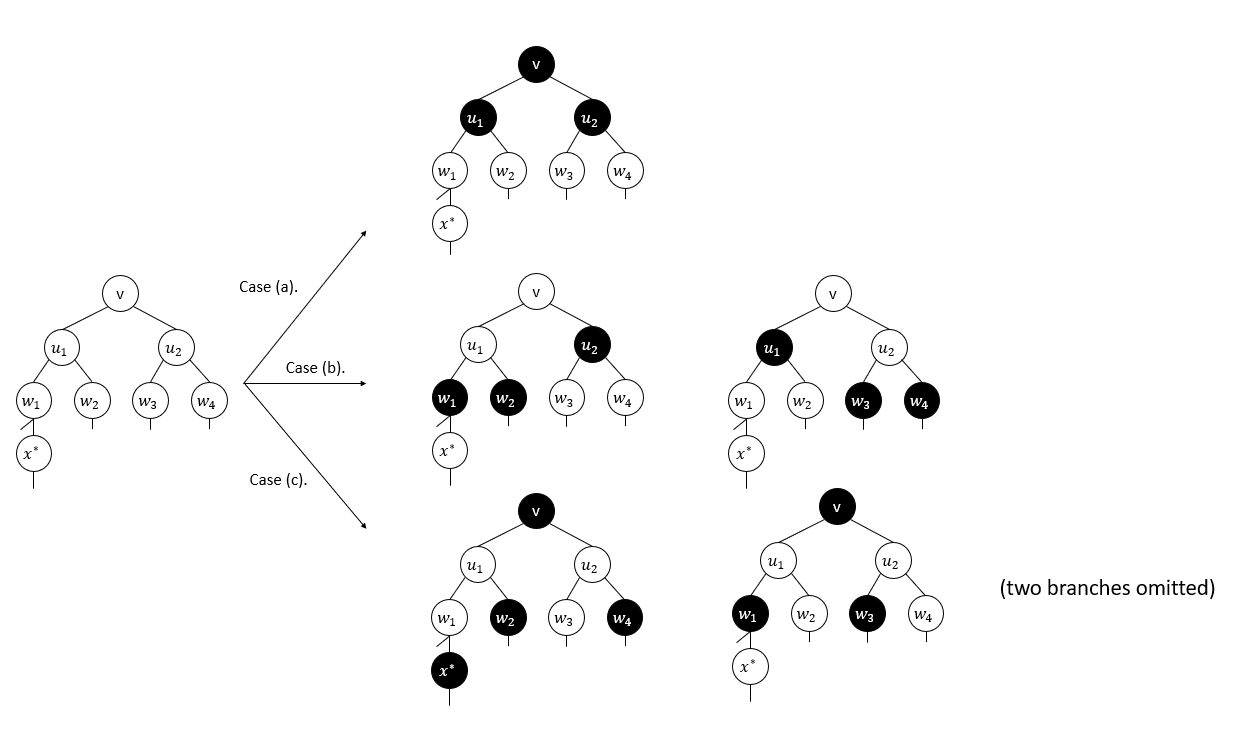}
        \caption{Step 5. Case 2: Vertices in the deletion set are denoted by black vertices.}
        \label{Fig:Step-5-Case-2}
    \end{figure}

    Case (a). Both of $u_1$ and $u_2$ are deleted and included in the deletion set:
    We delete $\{v, u_1, u_2\}$ from the graph and include them in the deletion set. The parameter $k$ decreases by 3.

    Case (b). Only one of $u_1$ and $u_2$ is deleted and included in the deletion set. We use $u_i (i \in \{1, 2\})$ to denote the other vertex:
    We can see that there is a maximum induced matching containing edge $u_i v$.
    Then we delete $N[\{u_i, v\}]$ from the graph and including $N(\{u_i, v\})$ in the deletion set. The parameter $k$ decreases by at least 3.

    Case (c). Neither $u_1$ nor $u_2$ is included in the deletion set:
    Since $u_1$ and $u_2$ are not adjacent, they must be in two different edges in the maximum induced matching $M$.
    We first generate two branches by deleting $N[\{u_1, w_i\}]$ for $i = 1$ and $2$ and including $N(\{u_1, w_i\})$ in the deletion set. The resulting graph is denoted as $G_i$.
    In $G_i$, since $u_2$ appears in $M$, vertex $u_2$ is not deleted or becomes a degree-0 vertex.
    If $u_2$ has only one neighbor $v^*$ or dominates one of its neighbors $v^*$, we further delete $N[\{u_2, v^*\}]$ and include $N(\{u_2, v^*\})$ in the deletion set.
    Otherwise we further branch into two branches by deleting $N[\{u_2, w_j\}]$ from the graph and including $N(\{u_2, w_j\})$ in the deletion set for $j = 3$ and $4$.
    In each branch, the parameter $k$ decreases by at least 2 since $u_2$ is not a dominating vertex and $w_j$ is adjacent to a vertex not in $N[u_2]$. 

    Note that in the subbranch by deleting $N[\{u_1, w_1\}]$, vertex $x^*$ is included in the deletion set and there are at least 3 vertices in $\{u,v_1,v_2,w_1,w_2,w_3,w_4\}$ are included in the deletion set.
    As a result, the above branching operation will generate a recurrence covered by one of the following three options:

    \[
        \begin{split}
            & T (k) \leq 3T (k - 3) + T(k - 3) + T(k - 4) + 1,\\
            & T (k) \leq 3T (k - 3) + T(k - 3) + 2T(k - 5) + 1,
        \end{split}
    \]
    and\\
    \[
        T (k) \leq 3T (k - 3) + 4T(k - 5) + 1.\\
    \]

    The branching factors of them are 1.6634, 1.6765 and 1.6478, respectively.

\vspace{2mm}
\noindent\textbf{Step 6} (Vertices of degree at least 5).
If there is a vertex $v$ of $d(v) > 4$, then branch on $v$ with Rule (B1) to generate $d(v) + 1$ branches
\[
    {\tt aim}(G \setminus \{v\},k - 1) \quad \mbox{and} \quad {\tt aim}(G \setminus N[\{v, u\}],k - |N(\{v, u\})|) \mbox{~for each $u\in N(v)$}.
\]

Since $v$ is a vertex of degree at least 5, due to Step 1, we know that it can not dominate any neighbor of it.
Thus each neighbor $u$ of $v$ is adjacent to at least one vertex out of $N[v]$ and then $|N(\{v, u\})| \ge d(v)$.
So we get a recurrence
\[
    T (k) \leq T (k - 1) + d(v)\times T (k - d(v)) + 1,
\]
where d(v) $> 4$.
For the worst case where $d(v) = 5$, the branching factor of it is 1.6595.
\vspace{2mm}





After the first six steps, the graph contains only vertices with degree 3 and 4, and there is no dominating vertex.
The next three steps are used to deal with degree-3/4 vertices in the graph.
For any degree-3 vertex $v$, there is at most one edge between its neighbors,
otherwise there must be a dominating vertex in the neighbors of $v$, and Step 1 would be applied.
We first consider a degree-3 vertex $v$ not contained in any triangle.

\vspace{2mm}
\noindent\textbf{Step 7} (Degree-3 vertices not in any triangle).
    If there is a degree-3 vertex $v$ such that there is no edge between its neighbors,
    then branch on $v$ with Rule (B1) to generate $d(v) + 1$ branches
    \[
        {\tt aim}(G \setminus \{v\},k - 1) \quad \mbox{and} \quad {\tt aim}(G \setminus N[\{v, u\}],k - |N(\{v, u\})|) \mbox{~for each $u\in N(v)$}.
    \]



Let $u_1$, $u_2$ and $u_3$ denote the three neighbors of $v$.
We have that $u_1,u_2,u_3$ are nonadjacent vertices of degree at least 3.
The branching operation will generate three branches.
Since $u_1$, $u_2$ and $u_3$ are not adjacent, we can see that $|N(\{v, u_i\})| = d(u_i) + d(v) - 2 = d(u_i) + 1 \ge 4 (i = 1, 2, 3)$.
This leads to a recurrence
\[
    T (k) \leq T (k - 1) + T (k - (d(u_1) + 1)) + T (k - (d(u_2) + 1)) + T (k - (d(u_3) + 1)) + 1,
\]
where $d(u_1),d(u_2),d(u_3) \ge 3$.
For the worst case that $d(u_1) = d(u_2) = d(u_3) = 3$, the branching factor is 1.6581.

Next we consider degree-3 vertices in triangles that are also adjacent to some degree-4 vertex.


\vspace{2mm}
\noindent\textbf{Step 8} (Degree-3 vertices adjacent to some degree-4 vertex).
    Assume there is a degree-3 vertex $v$ adjacent to at least one degree-4 vertex. After Step 7, each degree-3 vertex is in exactly one triangle since such degree-3 vertex can not be dominated.
    Let $u_1$, $u_2$ and $u_3$ be the three neighbors of $v$, where we assume without loss of generality that there is an edge between $u_1$ and $u_2$.
    First, we branch on $u_3$ with Rule (B1). In the branch of deleting $u_3$, we further branch on the vertex $v$ with Rule (B3).

Note that the degree-4 neighbor of $v$ can be any one of $u_1,u_2$ and $u_3$.
\begin{lemma}\label{Step 8-lemma}
    The branching factor of Step 8 is at most 1.6430.
\end{lemma}

\begin{proof}

    First, we assume without loss of generality that $d(u_1) \geq d(u_2)$.
    See Fig. \ref*{Fig:3} for an illustration.
    \begin{figure}[t]
        \centering
        \includegraphics[scale=0.1]{./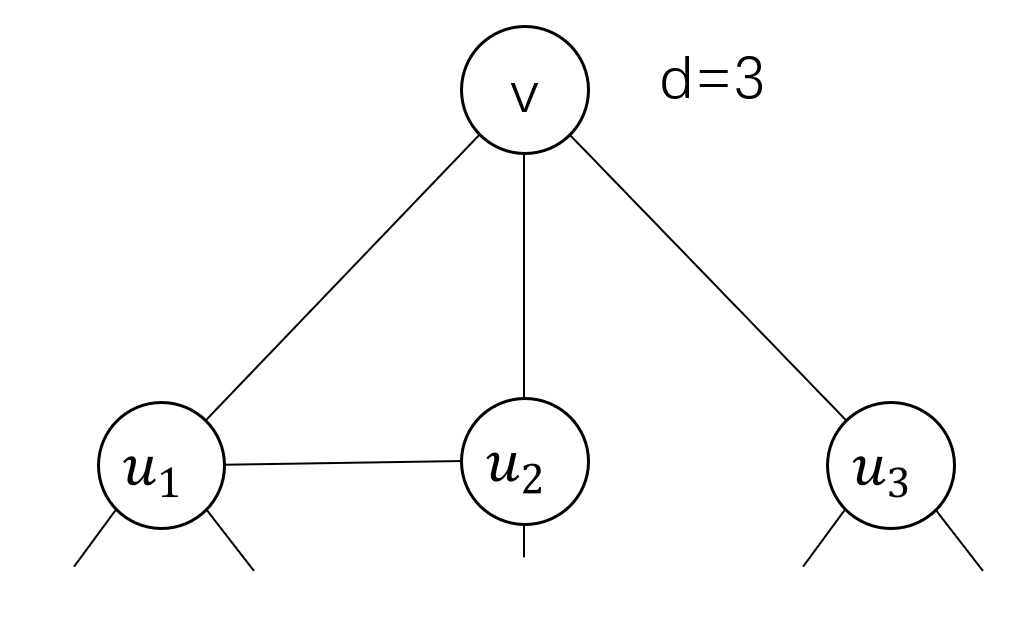}
        \caption{A degree-3 vertex $v$ in a triangle}
        \label{Fig:3}
    \end{figure}

Vertex $u_3$ is a vertex of degree at least 3 not dominating any neighbor of it.
Thus each neighbor $w$ of $u_3$ is adjacent to at least one vertex out of $N[u_3]$ and then $|N(\{u_3, w\})| \ge d(u_3)$.
Since there is an edge between $u_1$ and $u_2$, and both of $u_1$ and $u_2$ do not dominate $v$, vertex $v$ is adjacent to two vertices out of $N[u_3]$ and then $|N(\{u_3, v\})| = d(u_3) + 1$.
So we get a recurrence
\[
    T (k) \leq T (k - 1) + T (k - d(u_3) - 1) + (d(u_3) - 1) \times T (k - d(u_3)) + 1.
\]

Then, we consider two cases.

\textbf{Case 1.} $d(u_1) = d(u_2) = 4$: In the branch of deleting $u_3$, we further branch on the vertex $v$ with Rule (B3).
We get three branches
\[
    \begin{split}
    {\tt aim}(G \setminus \{v, u_1, u_2\}, k - 3), & \quad {\tt aim}(G \setminus N[\{u_1, v\}],k - |N(\{u_1, v\})|)\\
    \mbox{and} & \quad {\tt aim}(G \setminus N[\{u_2, v\}],k - |N(\{u_2, v\})|).
    \end{split}
\]

This leads to a recurrence
\[
    T(k) \leq T(k - 3) + T(k - (4 - 1)) + T(k - (4 - 1)) + 1.
\]

After combining the two branching operations, we get
\[
    \begin{split}
        {\tt aim}(G \setminus \{v, u_1, u_2, u_3\},k - 4), \quad {\tt aim}(G \setminus N[\{u_1, v\}], k -|N(\{u_1, v\})| - 1)&,\\
        {\tt aim}(G \setminus N[\{u_2, v\}], k -|N(\{u_2, v\})| - 1)&,\\
        \mbox{and}  \quad {\tt aim}(G \setminus N[\{u_3, w\}],k - |N(\{u_3, w\})|) \mbox{~for each $w \in N(u_3)$}&.
    \end{split}
\]


The corresponding recurrence is
\[
    T(k) \leq T(k - 4) + 2T(k - 3) + T (k - (d(u_3) + 1)) + (d(u_3) - 1) \times T (k - d(u_3)) + 1,
\]
where $d(u_3) \ge 3$.
For the worst case where $d(u_3) = 3$, the branching factor is 1.6430.

\textbf{Case 2.} $d(u_2) = 3$: In the branch of deleting $u_3$, we further branch on the vertex $v$ with Rule (B3).
We get two branches
\[
    {\tt aim}(G \setminus N[\{u_1, v\}],k - |N(\{u_1, v\})|) \quad \mbox{and} \quad {\tt aim}(G \setminus N[\{u_2, v\}],k - |N(\{u_2, v\})|).
\]

This leads to a recurrence
\[
    T(k) \leq  T(k - (d(u_1) - 1)) + T(k - (3 - 1)) + 1.
\]

After combining two branching operations, we get
\[
    \begin{split}
        \quad {\tt aim}(G \setminus N[\{u_1, v\}], k -|N(\{u_1, v\})| - 1)&,\\
        {\tt aim}(G \setminus N[\{u_2, v\}], k -|N(\{u_2, v\})| - 1)&,\\
        \mbox{and}  \quad {\tt aim}(G \setminus N[\{u_3, w\}],k - |N(\{u_3, w\})|) \mbox{~for each $w \in N(u_3)$}&.
    \end{split}
\]

The corresponding recurrence is
\[
    T(k) \leq T(k - d(u_1)) + T(k - 3) + T (k - (d(u_3) + 1)) + (d(u_3) - 1) \times T (k - d(u_3)) + 1,
\]
where $\min \{d(u_1),d(u_3)\} \ge 3$ and $\max \{d(u_1),d(u_3)\} \ge 4$.

\begin{table}[!t]
    \begin{center}
    \caption{The branching factors for different cases in Step 8. Case 2.}
    \begin{tabular}{l|c}
        \hline
        degree & branching factor \\
        \hline
        $d(u_1)=4,d(u_3)=4$ & 1.5785 \\  
        $d(u_1)=3,d(u_3)=4$ & 1.6181\\
        $d(u_1)=4,d(u_3)=3$ & 1.6181 \\
        $d(u_1)=3,d(u_3)=3$ & - \\
        \hline
    \end{tabular}\label{tb-Step-8-upd}
    \end{center}
\end{table}

Table~\ref{tb-Step-8-upd} shows the branching factors for different cases according to the value of $d(u_1)$ and $d(u_3)$.
Note that at least one vertex in $\{u_1, u_3\}$ is a degree-4 vertex. In Case 2, the worst branching factor is 1.6181.
However, for Case 1, the branching factor is 1.6430. The worst branching factor in Step 8 remains at 1.6430.
\end{proof}

The worst branching factors in the above eight steps are listed in Table 1.
After Step 8, there are no degree-3 vertices adjacent to degree-4 vertices, and each connected component is either 3-regular or 4-regular.

\begin{table}[!t]\label{tb-result-L}
    \begin{center}
    \caption{The branching factors of each of the first eight steps}
    \begin{tabular}{l|c|c|c|c|c|c|c|c}
       \hline
       Steps & Step 1 & Step 2 & Step 3 & Step 4 & Step 5 & Step 6 & Step 7 & Step 8\\
       \hline
       branching factors & 1.6181 & 1.6181 & 1.5214 & 1.6181 & \textbf{1.6765} & 1.6595 & 1.6581 & 1.6430 \\
       \hline
    \end{tabular}
    \end{center}
    \vspace{-6mm}
\end{table}

\vspace{2mm}
\noindent\textbf{Step 9} (3/4-regular graphs).
    Pick up an arbitrary vertex $v$ in the 3/4-regular graph and branch on it with Rule (B1).
\vspace{2mm}

Step 9 will not exponentially increase the running time bound of the algorithm. We can prove the following theorem.

\begin{theorem}
    \textsc{Almost Induced Matching} can be solved in $O^*(1.6765^k)$ time and polynomial space.
\end{theorem}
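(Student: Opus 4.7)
The plan is to assemble the correctness and running-time arguments for ${\tt aim}(G,k)$ that have been built up step by step. Correctness of the preprocessing steps is already in hand: Reduction-Rule~0 is the kernelization algorithm whose correctness is Lemma~\ref{correctness}; Reduction-Rule~\ref{rrule-4} is correct because any component in which every vertex has degree $2$ is a cycle and must therefore lose at least one vertex to any AIM-deletion set; Reduction-Rule~\ref{rrule-5} was proved correct in~\cite{xiao2020parameterized}. The branching Steps 1--7 are justified by Branching-Rule~\ref{rule-1}, Branching-Rule~\ref{rule-2}, and Lemma~\ref{dominating-lemma}, together with the case assumption that each earlier step is inapplicable. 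What remains is to collate the branching-factor bounds listed in the table after Step~7 and to argue that Step~8 does not blow up the exponential factor.

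For the running time on a graph in which some vertex triggers one of Steps 1--7, the worst branching factor is $1.6957$, achieved at Step 4 and Step 7, so on such instances $T(k) = O(1.6957^k)$ follows directly from standard branching recurrences. The main obstacle is therefore Step 8: it applies precisely when the current graph has no degree-$1$, degree-$2$, or degree-$\geq 5$ vertex, no dominating vertex, no degree-$3$ vertex outside a triangle, and no degree-$3$ vertex adjacent to a degree-$4$ vertex, which forces every connected component to be either $3$-regular or $4$-regular. The plan is to argue component by component. In a connected $3$-regular component, picking any vertex $v$ and applying Branching-Rule~\ref{rule-1} removes at least $v$ itself; any proper nonempty induced subgraph of a connected $3$-regular graph contains a vertex of degree at most $2$, so the resulting subinstances are handled by Steps 1--4 and Step 8 is never triggered again on the descendants of that component. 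In a connected $4$-regular component, one application of Step 8 leaves subinstances that either contain a vertex of degree $\le 3$ (handled by Steps 1--4,~6,~7) or decompose into $3$-regular pieces (handled by the previous case). Hence Step 8 is invoked only a polynomial number of times along any root-to-leaf path of the search tree, and it contributes only a polynomial factor to the overall running time, leaving the exponential bound $O^\ast(1.6957^k)$ intact.

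For the space complexity, the algorithm is a pure branch-and-search procedure: each recursive call stores only the current graph and the parameter, the reduction and kernelization routines run in polynomial time and space by Lemma~\ref{correctness}, and the recursion depth is bounded by $k$ since every branch strictly decreases the parameter or, in the polynomially-bounded invocations of Step 8, strictly decreases $|V|$. Thus the total working space is polynomial in $|V|$. Combining the time bound from the previous paragraph with this space bound yields the claimed $O^\ast(1.6957^k)$-time, polynomial-space algorithm.
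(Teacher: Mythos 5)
Your overall plan matches the paper's: collate the branching factors of Steps 1--7 (worst case $1.6957$ at Steps 4 and 7) and then argue that Step 8 does not affect the exponential part. The gap is in how you discharge Step 8. Your key inference --- ``Step 8 is invoked only a polynomial number of times along any root-to-leaf path, hence it contributes only a polynomial factor'' --- is not valid. Step 8 is a genuine branching step with $d(v)+1 = 4$ or $5$ children; on a 3-regular component its own recurrence is $T(k)\le T(k-1)+3T(k-3)$, whose branching factor is about $1.87 > 1.6957$. A branching node of this kind occurring even $\Theta(k)$ times along root-to-leaf paths (which your per-component bound allows: consider a disjoint union of $\Theta(k)$ triangular prisms, a graph on which Steps 1--7 and the reduction rules are all inapplicable, so Step 8 fires once per prism along a path) multiplies the leaf count by an exponential, not a polynomial, factor. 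So ``polynomially many bad nodes per path'' alone does not preserve the $1.6957^k$ bound; if Step 8 could really recur proportionally to $k$ in the analysis, the naive bound would degrade toward $1.87^k$.

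The paper closes exactly this hole differently: it first shows Step 8 is applied at most once to a \emph{connected} 3-regular graph (every proper induced subgraph of it has, in each component, a vertex of degree at most $2$, so Steps 1--4 take over afterwards), and similarly a constant number of times for a connected 4-regular graph; and then, crucially, it uses the fact that \textsc{Almost Induced Matching} decomposes over connected components, so the components are solved independently and their running times \emph{add} rather than multiply. With additivity, one constant-fan-out Step 8 application per component costs only a polynomial overall factor, and the $O^*(1.6957^k)$ bound survives. Your write-up has the per-component observation (and it is correct, including the proper-induced-subgraph argument), but without the component-independence/additivity step the jump from ``constantly many per component'' to ``only a polynomial factor overall'' does not go through; you should either add that decomposition argument or give an amortized measure argument that charges each Step 8 application against the parameter decrease it forces in its component.
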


\begin{proof}
   Among all the branching factors in the first eight steps, the worst one is 1.6765, which is in Step~5.
   If Step~9 is not executed, the algorithm ${\tt aim}(G, k)$ clearly runs in $O^*(1.6765^k)$ time.

   For a connected 3-regular graph, Step 9 can be applied for at most one time,
   as any properly induced subgraph of a connected 3-regular graph is neither a 3-regular nor a 4-regular graph. Thus, Step 9 will not affect the exponential part of the running time.
   We get that the algorithm runs in $O^*(1.6765^k)$  time on any connected 3-regular graph.
   For a 3-regular graph $G$, since we can solve each connected component in $O^*(1.6765^k)$ time and there are at most $n$ connected components, we can solve $G$ in $O^*(1.6765^k)$  time.

   For a connected 4-regular graph, after executing Step 9 for once (which will not exponentially affect the running time), the algorithm can always branch with a branching factor at most 1.6765 or
   get a 3-regular graph. For the latter case, we solve it in $O^*(1.6765^k)$  time directly by the above analysis. Thus, we can always solve a connected 4-regular graph in $O^*(1.6765^k)$ time.
   By the similar argument, if each connected component is 3-regular or 4-regular, we can solve the graph in $O^*(1.6765^k)$  time.
   This also implies that we can solve any graph in $O^*(1.6765^k)$ time.

   Each step uses only polynomial space. Thus, the theorem holds.
\end{proof}

\section{Conclusion}
In this paper, we study \textsc{Almost Induced Matching} from the prespective of parameterized algorithms, where the parameter $k$ represents the size of the deletion set.

In the context of kernelization, we introduce an enhanced structure called AIM crown decomposition, which effectively yields a $6k$-vertex kernel.
For further improvements, we may need to explore new structural properties and employ different techniques.
Note that the number of vertices in the kernel is already small.
It would also be interesting to achieve some nontrivial lower bounds for the kernel size.

In our parameterized algorithm, by using new methods to deal with degree-$3$ vertices adjacent to degree-$4$ vertices in the graph, we successfully avoid the bottlenecks in previous papers.
Table 2 shows the new bottleneck case in our algorithm generated by Step 5, which is to deal degree-$2$ vertices adjacent to two degree-$3$ vertices without an edge between them.

\section*{Acknowledgements}

This work was supported by the National Natural Science Foundation of China (Grant No. 62372095 and 62172077) and the Sichuan Natural Science Foundation (Grant No. 2023NSFSC0059).

%
%
\bibliographystyle{splncs04}
\bibliography{AIM}

\end{document}